\definecolor{hyptxt}{rgb}{0.7, 0.4, 0.9}
\newtheorem{defi}{Definition}[section]
\newtheorem{prop}{Proposition}[section]
\newtheorem{theo}{Theorem}[section]
\definecolor{hervecolor}{rgb}{0.8,0,0.7}
\newcommand{\ket}[1]{|\kern.3ex#1\kern.3ex\rangle}
\newcommand{\bra}[1]{\langle\kern.3ex #1 \kern.3ex|}
\newcommand{\scalar}[2]{\langle\kern.3ex #1 \kern.3ex|\kern.3ex#2\kern.3ex\rangle}
\newcommand{\ii}{\mathsf{i}}
\def\calH{{\mathcal H }}
\def\R{\mathbb{R}}
\def\lg{\langle }
\def\rg{\rangle }
\def\bu{\mathbbm{1}}
\def\ud{\mathrm{d}}
\def\sfN{\mathsf{N}}
\def\vv{\pmb{v}}
\def\vet{\pmb{\eta}}
\def\vph{\pmb{\phi}}
\def\vbe{\pmb{\beta}}
\def\vze{\pmb{\zeta}}
\def\sfD{\mathsf{D}}
\def\vt{\pmb{t}}
\def\vI{\pmb{I}}
\def\vJ{\pmb{J}}
\def\vK{\pmb{K}}
\def\bi{\widehat{\boldsymbol{\imath}}}
\def\bj{\widehat{\boldsymbol{\jmath}}}
\def\bk{\widehat{\boldsymbol{k}}}
\numberwithin{equation}{section}
\begin{document}
\date{\today}
 
\title[Euclidean POVMs]{Quantum formalism on the plane: POVM-Toeplitz quantization, Naimark theorem and linear polarisation of the light}
\author[R. Beneduci, E. Frion, J.-P. Gazeau \& A. Perri]{\small 
Roberto Beneduci$^{\mathrm{a}}$, Emmanuel Frion$^{\mathrm{b}}$, Jean-Pierre Gazeau$^{\mathrm{c}}$ and Amedeo Perri$^{\mathrm{d}}$}

\address{\parbox{\linewidth}{\textit{$^{\mathrm{A}}$ Dipartimento di Fisica, Universit\`a della Calabria}\\
		\textit{and Istituto Nationale di Fisica Nucleare, Gruppo c. Cosenza,} \\ 
		\textit{87036 Arcavacata di Rende (Cs), Italy}}} 

\address{\parbox{\linewidth}{\textit{$^{\mathrm{B}}$ Helsinki Institute of Physics, P.~O.~Box  64, \\ FIN-00014 University of Helsinki, Finland }}} 

\address{\parbox{\linewidth}{\textit{$^{\mathrm{C}}$ Universit\'e Paris Cit\'e,}\\ \textit{CNRS, Astroparticule et Cosmologie}  \\
		\textit{75013 Paris, France}}\\}
	
\address{\parbox{\linewidth}{\textit{$^{\mathrm{D}}$ Independent Scholar,}\\ \textit{Via Stazione 4, Spezzano della Sila}  \\
		\textit{87058 Cosenza, Italy}}\\}

\email{ \href{mailto:roberto.beneduci@unical.it}{roberto.beneduci@unical.it}, \href{mailto:emmanuel.frion@helsinki.fi}{emmanuel.frion@helsinki.fi}, \href{mailto:gazeau@apc.in2p3.fr}{gazeau@apc.in2p3.fr}}


{\abstract{\vspace{1cm} We investigate two aspects of the elementary example of POVMs on the Euclidean plane, namely their status as quantum observables and their role as quantizers in the integral quantization procedure. The compatibility of POVMs in the ensuing quantum formalism is discussed, and a Naimark dilation is found for the quantum operators. The relation with Toeplitz quantization is explained.  A physical situation is discussed, where we describe the linear polarization of the light with the use of Stokes parameters. In particular, the case of sequential measurements in a real bidimensional Hilbert space is addressed. An interpretation of the Stokes parameters in the framework of unsharp or fuzzy observables is given. Finally, a necessary condition  for the compatibility of two dichotomic fuzzy observables which provides a condition for the approximate joint measurement of two incompatible sharp observables is found.}}
		

\maketitle

\clearpage

\tableofcontents

\clearpage

\section{Introduction}
\label{intro}

Positive operator-valued measures (POVMs) play a central role in the Hilbert space formulation of quantum mechanics. Indeed, observables are represented by positive operator-valued measures, states are represented by positive trace-class operators with trace one and observables and states are in statistical duality (see next section). One of the main features of quantum physics is the existence of incompatible observables that cannot be measured simultaneously \cite{busch2016quantum}. In other words, measurement of two observables in quantum physics can show some incompatibility. This reflects in purely quantum phenomena, such as the violation of Bell inequalities \cite{bell1964einstein,bell1966problem} or the Kochen-Specker contextuality \cite{Koc}, with implications in quantum information processing \cite{spekkens2009preparation, kleinmann2011memory, grudka2014quantifying, chailloux2016optimal, abramsky2017contextual, schmid2018contextual, duarte2018resource, ghorai2018optimal} and in explaining the power of quantum computation \cite{galvao2005discrete,cormick2006classicality,anders2009computational,howard2014contextuality,hoban2014measurement,karanjai2018contextuality,frembs2018contextuality,raussendorf2019phase}.

Spectral measures, which are in one-to-one correspondence with self-adjoint operators, are particular cases of POVMs. The compatibility of two observables represented by spectral measures is equivalent to their commutativity. However, that is not true for general (unsharp) observables described by POVMs for which compatibility does not imply commutativity. This opens new perspectives and raises the question of compatibility between POVMs. General measurements  can  realize  arbitrary  joint  measurability  structure \cite{kunjwal2014quantum,andrejic2020joint}, and studying incompatible sets can prove relevant for quantum information protocols \cite{skrzypczyk2019all,pusey2015verifying,buscemi2019complete}. 

Several works have been devoted to find conditions for the compatibility of POVMs. See \cite{busch2016quantum} for a review, see also \cite{Wolf}, where the necessary and sufficient condition for compatibility of two dichotomic observables is reformulated as semidefinite program, and  \cite{beneduci2017joint} where it is shown that two POVMs are jointly measurable if and only if they can be dilated to two commuting orthogonal POVMs, though this result cannot be extended to the case of more than two POVMs \cite{heunen2014quantum}. In the present paper, we analyze the simple case of two dichotomic POVMs in a two-dimensional real Hilbert space, i.e. the Euclidean plane, and show that a condition for the compatibility can be derived starting from its definition. We provide an interpretation of fuzzy dichotomic observables in terms of the Stokes parameters for the polarization of an electromagnetic wave; this suggests an interpretation of the approximate joint measurement of incompatible observable as well. From the mathematical viewpoint, this is a particular case of the results found in Ref. \cite{busch2016quantum}, chapter 14, in which the complex two-dimensional case is analyzed; it is the real Hilbert space case obtained when the coefficient multiplying the Pauli matrix $\sigma_2$ in the definition of the self-adjoint operators is equal to zero, \textit{i.e.}, we have $A=a_0\bu+\mathbf{n}\cdot\mathbf{\sigma}$ with $\mathbf{n}=(n_1,0,n_3)$, $\mathbf{\sigma}=(\sigma_1,\sigma_2,\sigma_3)$ and $\bu$ the identity operator. In the framework of electromagnetic wave polarization, that means we are focusing on the linear polarization and are not taking into consideration the circular polarization.  

Besides their role as quantum observables, POVMs also provide a quantization procedure based on integral calculus. The properties of this integral quantization procedure were studied in \cite{bergeron2014quantizations}, where it was shown that a choice of POVM gives a well-defined quantum model for a system, and offers the possibility of a regularization effect as well as a full probabilistic content. These features have been leveraged in quantum cosmological models, where the Big Bang singularity is replaced by a bounce \cite{bergeron2014smooth,bergeron2015smooth,bergeron2015smoothquantum,bergeron2015singularity,bergeron2016nonadiabatic,bergeron2016vibronic,bergeron2017spectral,bergeron2018primordial,bergeron2018integrable,bergeron2020quantum,almeida2018three,frion2019affine}. POVMs play a key role in the phase-space formulation of quantum mechanics as well \cite{Ali,Prugovecki,Schroeck}.

Finally, we recall that an interesting class of measurements represented by POVMs concerns weak measurements. They are measurements which disturb (in the average) a quantum system very little, while extracting a little bit of information (see \textit{e.g.} \cite{brun2002simple,tamir2013introduction}). Weak measurements can be used experimentally to show the breaking of Bell's inequalities \cite{palacios2010experimental}, which can be used in order to set up experiments preserving entanglement even when performing many sequential measurements \cite{kim2012protecting,foletto2020experimental}. In the same vein, weak measurements are also used to test the noise-disturbance relation stemming from the Heisenberg uncertainty principle \cite{ozawa2003universally} in various experiments \cite{erhart2012experimental,kaneda2014experimental,inoue2020violation}.

The content of the present article shows various non-trivial aspects connected with the role of POVMs as both observables and quantization maps. One of its objective is to describe a sequential measurement with  POVMs acting on the Euclidean plane, certainly one of the simplest examples of Hilbert space. We particularly focus  on the question of incompatibility.  In this line, we begin by reviewing the definition of a POVM and compatibility between POVMs in Section \ref{genset}.  In Section \ref{integral quantization}, we give the definition and main properties of integral quantization and its concomitant semi-classical portraits by insisting on the  genuine role of  POVMs  in implementing the procedure. In Section \ref{toepnaim} we establish an original result, namely the relationship between our approach and  the so-called Toeplitz quantization, which allows us to find the Naimark dilation of our quantum operators. In Section \ref{intquant}, we apply our formalism and Euclidean plane to the circle. We show that the integral POVM quantisation of a real function defined on the circle  results in a $2\times 2$ matrix acting on the Euclidean plane. We then describe the corresponding semi-classical portrait or symbol calculus. We end the section by illustrating the Naimark-Toeplitz correspondence within this $2d$ Euclidean framework.   A nice  physical application of our approach is developed  in Section \ref{quantmeas}, where  we establish the link with Stokes parameters for the linear polarization of electromagnetic waves and describe the interaction of a linearly polarized light with a polarizer as an example of quantum measurement. We then discuss the approximate measurement of two incompatible sharp polarization observables $E$ and $E'$ realized by the joint measurement of their fuzzy versions $F$ and $F'$. This requires a characterization of the compatibility of two POVMs. We begin Section \ref{Compatibility} with an illustration of the incompatibility problem, then we derive the necessary conditions for the compatibility of dichotomic observables. As a last result, we show there is a direct relation between the fuzziness of a measurement and the degree of linear polarization of a physical observable}. We generalize some technical aspects of our results to $\mathbb{R}^n$ in Appendix \ref{QSON}.

\section{General setting}
\label{genset}

What is an observable? In standard quantum mechanics an observable is represented by a self-adjoint operator but it has been argued \cite{ludwig2012foundations,kraus1983states,holevo2011probabilistic,ali1982geometrical,busch2007approximate,busch2016quantum,schroeck2013quantum} that the choice of self-adjoint operators as representative of observables is too restrictive. It has been suggested that an observable is better represented by a POVM of which the spectral measures are particular examples. Before we give motivation for such a generalization of the concept of observable, let us recall the main definitions and properties of POVMs \cite{beneduci2017joint}. In the following, the set of positive trace-class self-adjoint operators with trace one is denoted by $S(\mathcal{H})$. 

\subsection{Definition of a POVM}
\label{defpovm}
\begin{defi}
A normalized Positive-Operator Valued measure (POVM) is a map $F:\mathcal{B}(\Omega)\to\mathcal{L}^+_s(\mathcal{H})$ from the Borel $\sigma$-algebra of a topological space $\Omega$ to the space of linear positive self-adjoint operators such that:
    \begin{align}
    \label{unF}
    F\left(\bigcup_{n=1}^{\infty}\Delta_n\right)&=\sum_{n=1}^{\infty}F(\Delta_n) \;,\\ \label{unFres}
    F(\Omega)&=\bu \;,
    \end{align}
    \noindent 
 where $\{\Delta_n\}$ is a countable family of disjoint sets in $\mathcal{B}(\Omega)$ and the series converges in the weak operator topology. The POVM is said to be real if $\Omega=\mathbb{R}$. A projection-valued measure (PVM) is a POVM such that $F(\Delta)$ is a projection operator for every $\Delta\in\mathcal{B}(\Omega)$.
 \label{definition povm}
\end{defi}

\noindent
Spectral measures are real PVMs. By the spectral theorem, spectral measures are in a one-to-one correspondence with self-adjoint operators. Therefore, if we use POVMs to represent observables, the set of observables represented by self-adjoint operators is just a subset of the set of observables.  

The motivation for such an extension is rooted in the statistical nature of the measurement process.  It is well known that by repeating the measurement $M$ of a system $S$ in the state $\rho\in S(\mathcal{H})$ we, in general, obtain different results (that we suppose contained in a topological space $\Omega$). On the other hand, by repeating the measurement $M$ several times, we observe a statistical regularity, which is encoded in the statistical distribution $f^M_\rho$ of the measurement outcomes, and depends only on the state of the system. In other words, we have a map $\rho\mapsto f^M_\rho$ ensuring the existence of regularities of a statistical nature. Note that this is a necessary condition for the existence of the scientific enterprise. In its turn, the statistical distribution $f_\rho^M$ is represented by a probability measure $\mu^M_\rho:\mathcal{B}(\Omega)\to[0,1]$. At this point of our argument, it is better to specify that the state of the system is defined by the procedure we followed to prepare the system before the measurement, and to recall that the space $S(\mathcal{H})$ of the states of a given system is a convex space (this corresponds to the existence of mixed states). For example, a filtering operation ensures that if a beam of electrons passes through the filter, the electrons in the beam must immediately afterward have positions within a restricted range of values.

The previous analysis can be summarized by saying that the measurement process $M$ defines a map from \lq\lq{}states\rq{}\rq{} to \lq\lq{}probability measures,\rq{}\rq{} $\rho\mapsto\mu^M_\rho$. Moreover, if the state is given by a convex combination of states, $\rho=t\rho_1+(1-t)\rho_2$, the measurement outcomes are given by a convex combination of probability measures, $t\mu^M_1+(1-t)\mu^M_2$. In mathematical terms, the map $\rho\mapsto \mu^M_\rho$ is affine. Note that the previous argument is of a general nature. It applies both to classical and quantum formalism (in mechanics, optics, ...). In the classical case, for example, the states are represented by probability measures on the phase space $\Gamma$, and the measurements are affine maps that transform probability measures on $\Gamma$ into probability measures on the space of possible outcomes of the measurement process, $\Omega$  \cite{holevo2011probabilistic}. Actually, the scheme we just described applies to any physical model where the states of the system define a convex set (see \cite{busch2016quantum} chapter 23). In order to introduce the concept of observable in this scheme, it is sufficient to note that different measurements can correspond to the same affine map. In this case, we say that the measurements are equivalent and divide the set of measurement into equivalence classes. An observable is an equivalent class of measurements. The same can happen with states: two states $\rho_1$ and $\rho_2$ are said to be equivalent if they cannot be separated by the set of measurements, \textit{i.e.}, if $\mu_1^M=\mu_2^M$ for every measurement $M$.   

Now, let us focus on the particular case of the quantum framework where, to each system, a complex and separable Hilbert space $\mathcal{H}$ is associated  and the states are represented by density operators, \textit{i.e.} positive, bounded self-adjoint operators with trace $1$. We have the following characterization of the measurement process:

\begin{theo}[Holevo, \cite{holevo2011probabilistic}]
There is a one-to-one correspondence between POVMs $F:\mathcal{B}(\Omega)\to\mathcal{L_s^+(H)}$ and affine maps $S(\mathcal{H})\mapsto\mathcal{M}_+(\Omega)$ from states to probability measures which is given by $\mu(\Delta)=Tr(\rho F(\Delta))$, $\Delta\in\mathcal{B}(\Omega)$, $\rho\in S(\mathcal{H})$, $\mu\in\mathcal{M}_+(\Omega)$.
\end{theo}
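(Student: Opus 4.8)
The plan is to establish the Holevo correspondence by constructing the map in both directions and checking it is a bijection. First I would verify the easy direction: given a POVM $F:\mathcal{B}(\Omega)\to\mathcal{L}^+_s(\mathcal{H})$, the formula $\mu(\Delta)=\mathrm{Tr}(\rho F(\Delta))$ does define a probability measure for each fixed $\rho\in S(\mathcal{H})$. Positivity of $\mu(\Delta)$ follows from positivity of $F(\Delta)$ and of $\rho$ (trace of a product of positive operators is nonnegative); normalization $\mu(\Omega)=\mathrm{Tr}(\rho\bu)=1$ follows from \eqref{unFres} and $\mathrm{Tr}\,\rho=1$; and countable additivity follows from \eqref{unF} together with the fact that $\mathrm{Tr}(\rho\,\cdot\,)$ is continuous in the weak operator topology on bounded sets (equivalently, one can invoke monotone convergence for the increasing sequence of partial sums of positive operators). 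Then I would check that $\rho\mapsto\mu^F_\rho$ is affine, which is immediate from linearity of the trace. This produces a well-defined assignment $F\mapsto(\rho\mapsto\mu^F_\rho)$.

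Next I would address injectivity: if two POVMs $F_1,F_2$ yield the same affine map, then $\mathrm{Tr}(\rho F_1(\Delta))=\mathrm{Tr}(\rho F_2(\Delta))$ for all $\rho\in S(\mathcal{H})$ and all $\Delta$. Since every positive trace-class operator is a nonnegative multiple of a state and these span the trace-class operators, and since $\mathcal{L}(\mathcal{H})$ is the dual of the trace-class operators under the pairing $(\rho,A)\mapsto\mathrm{Tr}(\rho A)$, one concludes $F_1(\Delta)=F_2(\Delta)$ for every $\Delta$. In practice it suffices to test against rank-one projections $\rho=\ket{\psi}\bra{\psi}$: equality of $\bra{\psi}F_1(\Delta)\ket{\psi}$ and $\bra{\psi}F_2(\Delta)\ket{\psi}$ for all unit vectors $\psi$ forces equality of the self-adjoint operators $F_1(\Delta)$ and $F_2(\Delta)$ by polarization.

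The harder direction is surjectivity: given an affine map $\Phi:S(\mathcal{H})\to\mathcal{M}_+(\Omega)$, produce a POVM $F$ with $\mathrm{Tr}(\rho F(\Delta))=\Phi(\rho)(\Delta)$. The idea is to fix $\Delta\in\mathcal{B}(\Omega)$ and consider the functional $\rho\mapsto\Phi(\rho)(\Delta)\in[0,1]$. This is affine and bounded on the state space; one extends it by homogeneity and linearity to a bounded linear functional on the real-linear span of $S(\mathcal{H})$ inside the self-adjoint trace-class operators, then (using that the self-adjoint trace-class operators are spanned by states, via the spectral decomposition of a self-adjoint trace-class operator into positive and negative parts) to all of the trace-class operators. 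Boundedness (with norm $\le 1$) comes from $0\le\Phi(\rho)(\Delta)\le 1$. By the duality $(\mathcal{T}(\mathcal{H}))^*=\mathcal{L}(\mathcal{H})$, this functional is represented by a unique bounded operator $F(\Delta)$, and the bounds force $0\le F(\Delta)\le\bu$, so $F(\Delta)\in\mathcal{L}^+_s(\mathcal{H})$. The main obstacle — and the step deserving the most care — is verifying that $\Delta\mapsto F(\Delta)$ is countably additive in the weak operator topology and that $F(\Omega)=\bu$: the former follows by testing against each state $\rho$, using countable additivity of the measure $\Phi(\rho)$ and the fact that a sequence of positive operators whose $\rho$-expectations converge for every $\rho$ converges weakly; the latter follows from $\Phi(\rho)(\Omega)=1$ for all $\rho$. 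Finally I would note that the two constructions are mutually inverse, which is clear since both are pinned down by the pairing with states, completing the proof of the one-to-one correspondence.
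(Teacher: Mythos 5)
The paper does not prove this statement: it is quoted from Holevo's monograph with a citation and used as a black box, so there is no in-paper argument to compare yours against. Judged on its own, your proof is correct and follows the standard route (the one in Holevo's book): the easy direction plus injectivity via testing on rank-one states, and surjectivity via the duality $(\mathcal{T}(\mathcal{H}))^{*}=\mathcal{L}(\mathcal{H})$ between trace-class and bounded operators. The one step you wave at rather than carry out is the well-definedness of the linear extension of $\rho\mapsto\Phi(\rho)(\Delta)$ from $S(\mathcal{H})$ to the self-adjoint trace class: this is precisely where affinity is consumed, and it needs the short check that if $\lambda_{1}\rho_{1}-\lambda_{2}\rho_{2}=\lambda_{3}\rho_{3}-\lambda_{4}\rho_{4}$ with $\lambda_{i}\geq 0$ and $\rho_{i}\in S(\mathcal{H})$, then taking traces gives $\lambda_{1}+\lambda_{4}=\lambda_{2}+\lambda_{3}$, and rescaling turns both sides into equal convex combinations of states, on which $\Phi(\cdot)(\Delta)$ agrees by affinity. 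With that line inserted, the argument is complete; everything else (positivity and the bound $0\leq F(\Delta)\leq\bu$ from $\Phi(\rho)(\Delta)\in[0,1]$, weak countable additivity by testing against each state, $F(\Omega)=\bu$ from normalization, and the two constructions being mutually inverse) is handled correctly.
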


\noindent

Note that equivalent measurements are represented by the same POVM and this establishes a one-to-one map between observables and POVMs. Moreover, $\mu_\rho^M=Tr[F(\Delta)\rho]$ establishes a statistical duality between observables (POVMs) and states (density operators) \cite{busch2016quantum}.

\subsection{Naimark theorem and compatibility of POVMs}
\label{naimark compatibility}

In the previous section, we have seen that observables are represented by POVMs and that observables represented by spectral measures (self-adjoint operators) define a subset of the space of observables. Now,  we recall Naimark\rq{}s dilation theorem which ensures that every POVM is the projection of a projector-valued measure defined in an extended Hilbert space. 

\begin{theo}[Naimark, \cite{gelfand1943imbedding}]
Let $F:\mathcal{B}(\Omega)\to\mathcal{L}^+_s(\mathcal{H})$ be a POVM in a Hilbert space $\mathcal{H}$. Then, there is an extended Hilbert space $\mathcal{H}^+$ and a projector-valued (PV) measure $E^+:\mathcal{B}(\Omega)\to\mathcal{L}_s(\mathcal{H}^+)$ such that 
$$PE^+(\Delta)\psi=F(\Delta)\psi,\quad \psi\in\mathcal{H}, \quad \Delta\in\mathcal{B}(\Omega) \,,$$
where $P$ is the projection operator onto $\mathcal{H}$.
\end{theo}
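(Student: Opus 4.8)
The plan is to prove this by the standard dilation construction of GNS type, building $\mathcal{H}^+$ directly from a positive sesquilinear form attached to $F$. First I would introduce the complex vector space $\mathcal{V}$ of simple $\mathcal{H}$-valued functions on $\Omega$, i.e.\ finite sums $f=\sum_i \chi_{\Delta_i}\psi_i$ with $\Delta_i\in\mathcal{B}(\Omega)$ and $\psi_i\in\mathcal{H}$, and equip it with
\[
\langle f,g\rangle_F \;=\; \sum_{i,j}\langle \psi_i\,,\,F(\Delta_i\cap\Delta'_j)\,\phi_j\rangle\,,\qquad g=\sum_j\chi_{\Delta'_j}\phi_j\,.
\]
Refining to a common measurable partition shows that this form is well defined (independent of the chosen representations of $f$ and $g$) and, since every $F(\Delta)$ is positive, positive semidefinite. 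Setting $\mathcal{N}=\{f\in\mathcal{V}:\langle f,f\rangle_F=0\}$, which is a subspace by Cauchy--Schwarz, the form descends to a genuine inner product on $\mathcal{V}/\mathcal{N}$, and $\mathcal{H}^+$ is taken to be the completion of this pre-Hilbert space.

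Next I would construct the PV measure. For $\Delta\in\mathcal{B}(\Omega)$ let $\widehat E(\Delta)\colon\mathcal{V}\to\mathcal{V}$ be multiplication by the indicator, $\widehat E(\Delta)f=\chi_\Delta f$. Using $F(\Delta_i\cap\Delta)\le F(\Delta_i)$ one gets the contraction estimate $\langle\chi_\Delta f,\chi_\Delta f\rangle_F\le\langle f,f\rangle_F$, so $\widehat E(\Delta)$ maps $\mathcal{N}$ into $\mathcal{N}$ and extends to a bounded operator $E^+(\Delta)$ on $\mathcal{H}^+$. Because $\chi_\Delta^2=\chi_\Delta$ and, on a common partition, $\langle\chi_\Delta f,g\rangle_F=\langle f,\chi_\Delta g\rangle_F$, each $E^+(\Delta)$ is an orthogonal projection; $E^+(\Omega)=\bu$ since $\chi_\Omega f=f$; and the $\sigma$-additivity of $F$ in the weak operator topology transfers to $E^+$. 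Hence $E^+\colon\mathcal{B}(\Omega)\to\mathcal{L}_s(\mathcal{H}^+)$ is a PVM.

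Finally I would embed $\mathcal{H}$ and compute the compression. Define $J\colon\mathcal{H}\to\mathcal{H}^+$ by $J\psi=[\chi_\Omega\psi]$, the class of the constant function $\psi$. By \eqref{unFres}, $\langle J\psi,J\psi\rangle_F=\langle\psi,F(\Omega)\psi\rangle=\|\psi\|^2$, so $J$ is an isometry; identifying $\mathcal{H}$ with $J(\mathcal{H})\subseteq\mathcal{H}^+$ and letting $P$ be the orthogonal projection onto it, for all $\psi,\phi\in\mathcal{H}$ one has
\[
\langle\psi\,,\,P\,E^+(\Delta)\,\phi\rangle \;=\; \langle J\psi\,,\,E^+(\Delta)\,J\phi\rangle \;=\; \langle\chi_\Omega\psi\,,\,\chi_\Delta\phi\rangle_F \;=\; \langle\psi\,,\,F(\Delta)\,\phi\rangle\,,
\]
and since $\psi$ is arbitrary this gives $P\,E^+(\Delta)\,\psi=F(\Delta)\,\psi$ for every $\psi\in\mathcal{H}$, which is the assertion. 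One may further check that the closed linear span of $\{E^+(\Delta)J\psi:\Delta\in\mathcal{B}(\Omega),\psi\in\mathcal{H}\}$ is all of $\mathcal{H}^+$, so the dilation obtained is the minimal one.

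The construction is short, and the \emph{main obstacle} is purely the bookkeeping in the places where the two defining properties of a POVM are really used: the well-definedness and positivity of $\langle\cdot,\cdot\rangle_F$ on $\mathcal{V}$, the claim that $\widehat E(\Delta)$ preserves $\mathcal{N}$ and is contractive, and the passage of countable additivity from $F$ to $E^+$ in the weak topology. A more economical route, if one is willing to quote heavier machinery, is to apply Stinespring's dilation theorem to the unital completely positive map $\Phi\colon C_b(\Omega)\to\mathcal{L}(\mathcal{H})$, $\Phi(h)=\int_\Omega h\,\ud F$, obtaining $\Phi(h)=J^*\pi(h)J$ for a $*$-representation $\pi$ and an isometry $J$, and then letting $E^+$ be the PV measure associated with $\pi$ by the spectral theorem for representations of commutative $C^*$-algebras; this trades the explicit construction for two standard black boxes.
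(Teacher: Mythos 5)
Your construction is correct, and it is the standard one (the positive-definite-kernel / GNS dilation going back to Naimark and Stinespring): the form $\langle\cdot,\cdot\rangle_F$ on simple $\mathcal{H}$-valued functions is well defined and positive semidefinite after refining to a disjoint partition, multiplication by $\chi_\Delta$ is a contraction because $F(\Delta_i\cap\Delta)\le F(\Delta_i)$, and the compression identity follows from $F(\Omega)=\bu$. The comparison with the paper is of a different nature than usual, because the paper states this theorem as a quoted classical result and gives no proof of it at all; the only place it engages with Naimark dilation constructively is Section \ref{toepnaim}, where, for the special POVMs $F(\Delta)=\int_\Delta |x\rg\lg x|\,\ud\nu(x)$ arising from coherent-state quantization, Proposition \ref{Toeplitz quantization} exhibits the dilation explicitly: the big space is $L^2(X,\ud\mu)$, the PV measure is multiplication by indicators $M_{\chi_\Delta}$, and the embedding is $U_{\mathfrak{o}}:|e_i\rg\mapsto\phi_i$ followed by the compression $\mathbb{P}_{\mathfrak{o}}$. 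Your abstract construction, specialized to that case, reproduces exactly this: the map sending the class of $\chi_\Delta\psi$ to the function $x\mapsto\chi_\Delta(x)\sqrt{\mathcal{N}(x)}\lg x|\psi\rg$ is a unitary from your minimal $\mathcal{H}^+$ into $L^2(X,\ud\mu)$ intertwining your $E^+(\Delta)$ with $M_{\chi_\Delta}$ and your $J$ with $U_{\mathfrak{o}}$. So what your argument buys is the general theorem (arbitrary POVM, no assumed coherent-state structure), while the paper's Toeplitz route buys an explicit, computable model of the dilation in the cases it actually uses. Two small points of hygiene: since the paper works throughout with \emph{real} Hilbert spaces (the Euclidean plane), you should note that the same construction goes through verbatim with a symmetric bilinear form in place of the sesquilinear one; and in your Stinespring aside, for non-compact $\Omega$ one should work with a regular Borel POVM and $C_0(\Omega)$ (or bounded measurable functions) rather than $C_b(\Omega)$ to recover the PV measure from the representation.
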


\noindent
Naimark\rq{}s theorem provides a necessary and sufficient condition for the compatibility of two POVMs.

\begin{defi}
Two POVMs $F_1:\mathcal{B}(\Omega_1)\to\mathcal{L}^+_s(\mathcal{H})$ and $F_2:\mathcal{B}(\Omega_2)\to\mathcal{L}^+_s(\mathcal{H})$ are compatible if there is a third POVM $F:\mathcal{B}(\Omega_1\times\Omega_2)\to\mathcal{L}^+_s(\mathcal{H})$ of which $F_1$ and $F_2$ are the marginals, \textit{i.e.},
$$F_1(\Delta_1)=F(\Delta_1\times\Omega_2),\quad F_2(\Delta_2)=F(\Omega_1\times\Delta_2).$$
\end{defi}

\noindent
We recall that the symbol $\mathcal{B}(\Omega_1\times \Omega_2)$ denotes the product $\sigma$-algebra generated by the family of sets $\{\Delta_1\times\Delta_2\,\,:\,\,\Delta_1\in\mathcal{B}(\Omega_1),\,\Delta_2\in\mathcal{B}(\Omega_2)\}$.

\begin{theo}[\cite{beneduci2017joint}]\label{cnsNaimark}
Two POVMs $F_1:\mathcal{B}(\Omega_1)\to\mathcal{L}^+_s(\mathcal{H})$ and $F_2:\mathcal{B}(\Omega_2)\to\mathcal{L}^+_s(\mathcal{H})$ are compatible if and only if there are two Naimark extensions $E^+_1:\mathcal{B}(\Omega_1)\to\mathcal{L}_s(\mathcal{H}^+)$ and $E^+_2:\mathcal{B}(\Omega_2)\to\mathcal{L}_s(\mathcal{H}^+)$ such that $[E^+_1,E^+_2]=0$. 
\end{theo}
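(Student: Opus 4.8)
The plan is to prove the two implications separately, using Naimark's dilation theorem as the bridge in each direction, together with the classical fact that a pair of commuting projection-valued measures admits a joint PV measure on the product $\sigma$-algebra. Throughout I regard $\mathcal{H}$ as a closed subspace of $\mathcal{H}^+$ and write $P$ for the orthogonal projection of $\mathcal{H}^+$ onto $\mathcal{H}$, so that the Naimark condition $PE^+(\Delta)\psi=F(\Delta)\psi$ for $\psi\in\mathcal{H}$ is the same as $PE^+(\Delta)P=F(\Delta)$ on $\mathcal{H}$ (since $P\psi=\psi$ there).

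For necessity, suppose $F_1$ and $F_2$ are compatible, and let $F:\mathcal{B}(\Omega_1\times\Omega_2)\to\mathcal{L}^+_s(\mathcal{H})$ be a joint POVM with $F_1,F_2$ as marginals. I would apply Naimark's theorem to $F$ to obtain an extended space $\mathcal{H}^+$ and a PV measure $E^+:\mathcal{B}(\Omega_1\times\Omega_2)\to\mathcal{L}_s(\mathcal{H}^+)$ with $PE^+(\Delta)\psi=F(\Delta)\psi$ for $\psi\in\mathcal{H}$, and then set $E^+_1(\Delta_1):=E^+(\Delta_1\times\Omega_2)$ and $E^+_2(\Delta_2):=E^+(\Omega_1\times\Delta_2)$. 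A short verification shows each $E^+_i$ is a PV measure: it is projection-valued since $E^+$ is, countably additive because $\bigcup_n(\Delta^{(n)}_1\times\Omega_2)=\bigl(\bigcup_n\Delta^{(n)}_1\bigr)\times\Omega_2$, and normalized since $E^+(\Omega_1\times\Omega_2)=\bu$; moreover it extends $F_i$ in the sense of Naimark, because $PE^+_1(\Delta_1)\psi=PE^+(\Delta_1\times\Omega_2)\psi=F(\Delta_1\times\Omega_2)\psi=F_1(\Delta_1)\psi$ for $\psi\in\mathcal{H}$, and symmetrically for $E^+_2$. Commutativity is then immediate from the multiplicativity $E^+(A)E^+(B)=E^+(A\cap B)$ of a PV measure (valid for arbitrary measurable $A,B$): taking $A=\Delta_1\times\Omega_2$ and $B=\Omega_1\times\Delta_2$, so that $A\cap B=\Delta_1\times\Delta_2$, gives
\[
E^+_1(\Delta_1)\,E^+_2(\Delta_2)=E^+(\Delta_1\times\Delta_2)=E^+_2(\Delta_2)\,E^+_1(\Delta_1)
\]
for all $\Delta_1\in\mathcal{B}(\Omega_1)$ and $\Delta_2\in\mathcal{B}(\Omega_2)$, i.e. $[E^+_1,E^+_2]=0$.

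For sufficiency, suppose conversely that $E^+_1$ and $E^+_2$ are commuting Naimark extensions of $F_1$ and $F_2$ on a common extended space $\mathcal{H}^+$. Since the two PV measures commute, I would first build their joint PV measure $E^+:\mathcal{B}(\Omega_1\times\Omega_2)\to\mathcal{L}_s(\mathcal{H}^+)$, determined on measurable rectangles by $E^+(\Delta_1\times\Delta_2)=E^+_1(\Delta_1)E^+_2(\Delta_2)$, which is a projection because it is the product of two commuting projections; then I would define $F(\Delta):=PE^+(\Delta)P$ as an operator on $\mathcal{H}$. Positivity and $\sigma$-additivity of $F$ are inherited from $E^+$ (for $\psi\in\mathcal{H}$, $\langle\psi,F(\Delta)\psi\rangle=\|E^+(\Delta)\psi\|^2\ge0$), and $F(\Omega_1\times\Omega_2)=PE^+(\Omega_1\times\Omega_2)P=P\bu P=\bu$ on $\mathcal{H}$, so $F$ is a POVM on $\mathcal{B}(\Omega_1\times\Omega_2)$. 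Its marginals are exactly $F_1$ and $F_2$: indeed $F(\Delta_1\times\Omega_2)=PE^+_1(\Delta_1)E^+_2(\Omega_2)P=PE^+_1(\Delta_1)P=F_1(\Delta_1)$ by the Naimark property of $E^+_1$, and symmetrically $F(\Omega_1\times\Delta_2)=F_2(\Delta_2)$. Hence $F_1$ and $F_2$ are compatible.

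The only genuinely delicate step, and the one I expect to be the main obstacle, is the construction of the joint PV measure in the sufficiency direction: one has the obvious candidate on measurable rectangles, but promoting it to a $\sigma$-additive projection-valued map on the full product $\sigma$-algebra $\mathcal{B}(\Omega_1\times\Omega_2)$ requires invoking the joint-spectral-measure theorem for commuting spectral measures (available for the outcome spaces relevant here, such as $\mathbb{R}$ and the circle) rather than a hands-on extension. Everything else reduces to routine bookkeeping with the Naimark intertwining relation and the multiplicativity of PV measures.
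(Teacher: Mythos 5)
Your argument is correct, and it is essentially the standard one: the paper itself states Theorem \ref{cnsNaimark} without proof, citing \cite{beneduci2017joint}, and the proof given there follows exactly your two-step route (dilate the joint POVM and take marginals of the resulting PV measure for necessity; compress the product of the two commuting PV measures for sufficiency). The one delicate point you correctly flag --- promoting the rectangle formula $E^+(\Delta_1\times\Delta_2)=E^+_1(\Delta_1)E^+_2(\Delta_2)$ to a $\sigma$-additive PV measure on the full product $\sigma$-algebra --- is handled in the reference by the joint-spectral-measure theorem for commuting spectral measures, so nothing is missing.
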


\section{An outline of integral quantization and semi-classical portraits}
\label{integral quantization}
In this section, we recall  the fundamental role played by POVMs as quantizers of classical models of physical systems, \textit{i.e.} in the building of self-adjoint operators and their subsequent PV measures \cite{bergeron2014quantizations,gazeau2015positive}.
\subsection{Quantization map}

To define a quantization procedure for functions defined on a set $X$ (\textit{e.g.} a phase space), the map from elements in a  ``classical'' vector space to operators in some Hilbert space must fulfill basic requirements. A minimal program should meet the four following criteria \cite{bergerongazeau2014,ali2014coherent}:

\begin{enumerate}
	\item \textit{Linearity}. Quantization is a linear map $f\mapsto A_f$:
	\begin{align} \label{qmap1}
	\mathfrak{Q}:\mathcal{C}(X)\mapsto\mathcal{A}(\calH)\, , \qquad \mathfrak{Q}(f) = A_{f}\, , 
	\end{align} 
	where
	\begin{itemize}
		\item $\mathcal{C}(X)$ is a vector space of complex or real-valued functions $f(x)$
		on a set $X$, \textit{i.e.} a ``classical'' mathematical model, 
		\item $\mathcal{A}(\calH)$ is a vector space of linear operators  in some real or complex Hilbert space $\calH$, \textit{i.e.}, a ``quantum'' mathematical model, notwithstanding the question of common domains in the case of unbounded operators.  
	\end{itemize}
	\item \textit{Unity}. The map \eqref{qmap1} is such that the function $f=1$ is mapped to the identity operator $\bu$ on $\calH$.
	\item \textit{Reality}. A real function $f$ is mapped to a  self-adjoint operator $A_{f}$ in $\calH$ or, at least, a symmetric operator (in the infinite-dimensional case).
	\item \textit{Covariance}. Defining the action of a symmetry group G on X by $\left(g,x\right)\in G\times X$ such as $\left(g,x\right) \mapsto g\cdot x \in X$, there is a unitary representation $U$ of $G$ such that  $A_{T(g)f}= U(g)A_f U(g^{-1})$, with $(T(g)f)(x)= f\left(g^{-1}\cdot x\right)$.
\end{enumerate}

In the remainder of this paper, we suppose $\calH$ be a finite Hilbert space. Therefore, the quantized objects $A_f$ are matrices \cite{gazeau2009coherent}. Note that the previous requirements are minimal. In this work, we make use of POVMs to define the quantization procedure, and it is therefore necessary to consider  additional minimal structures, namely a measure $\nu$ on $X$, the $\sigma$-algebra defined in \eqref{definition povm}, and local compactness. POVMs can be used in conjunction with integral quantization to fulfill  the first three above requirements. However, covariance needs some additional structure \cite{bergeron2019orientations,gazeau2015positive}. In particular, Euclidean POVMs provide covariance, as we will see in section \ref{sec:integquant}.

The integral quantization of a function $f(x)$ on the measure space $(X,\nu)$ is defined by the linear map 
\begin{equation}
\label{iqmap}
	f\mapsto A_{f}=\int_{X}\,{\sf M}(x)\, f(x)\,\ud\nu(x)\,,
\end{equation}
where the family of  matrices ${\sf M}(x)$  solves the identity  as
\begin{equation}
\label{resunitM}
	X\ni x\mapsto\mathsf{M}(x)\,,\quad\int_{X}\,{\sf M}(x)\,\ud\nu(x)=\bu\,.
\end{equation} 

If the matrices ${\sf M}(x)$ are non-negative, the quantum operator related to the characteristic function on $\Delta$, $A(\chi_\Delta)$, defines a POVM . Indeed, the restriction to $\Delta$ of the quantization map \eqref{iqmap} 
\begin{align}
	F(\Delta):=A(\chi_\Delta)=\int_{X}\,{\sf M}(x)\, \chi_\Delta(x)\,\ud\nu(x)=\int_{\Delta}\,{\sf M}(x)\, \ud\nu(x) \;,
\end{align}
solves the identity \eqref{unFres} by definition. Therefore, we have shown two key roles of POVMs; they are the mathematical representatives of observables and they provide a quantization procedure.

\subsection{Semi-classical portraits}
\label{semclasspor}
A quantum model for a system must have a classical or semi-classical counterpart. We say there exists a \textit{dequantization} map acting on an operator $A_f$ to give back the original $f(x)$, in general with some corrections. This procedure gives a \textit{semi-classical portrait} of the original function $f$, often denoted $\check f(x)$ (Husimi and Wigner functions \cite{zachos2005quantum},  lower (Lieb)  \cite{lieb1973classical} or covariant (Berezin) \cite{berezin1975general} symbols). Given two families ${\sf M}_{\mathrm{a}}(x)$ (for ``analysis'') and ${\sf M}_{\mathrm{r}}(x)$ (for ``reconstruction") resolving the identity in the sense of \eqref{resunitM},
we here generalize  the definition of these lower symbols in the following way:
\begin{equation}
 \label{lowsymbgen}
A_{f}\mapsto\check{f}(x):=\mathrm{tr}(\mathsf{M}_{\mathrm{r}}(x)\, A_{f})= \int_{X}\,\mathrm{tr}(\mathsf{M}_{\mathrm{r}}(x){\sf M}_{\mathrm{a}}(x^{\prime}))\, f(x^{\prime})\,\ud\nu(x^{\prime})\,,
\end{equation} 
Choosing $\mathsf{M}_{\mathrm{a}}(x)=\rho_{\mathrm{a}}(x)$ and $\mathsf{M}_{\mathrm{r}}(x)=\rho_{\mathrm{r}}(x)$ to be  density matrices, the map $x^{\prime} \mapsto \mathrm{tr}(\rho_{\mathrm{r}}(x)\rho_{\mathrm{a}}(x^{\prime}))$ defines a probability
distribution $\mathrm{tr}(\rho_{\mathrm{r}}(x)\rho_{\mathrm{a}}(x^{\prime}))$ on the measure space $(X,\ud\nu(x^{\prime}))$, as it is easily proved by multiplying the resolution of the unity \eqref{resunitM} with $\rho_{\mathrm{r}}(x)$ and tracing the result. Thus, the expectation value of the operator $A_f$, $\check{f}(x)$, is built from the analysis POVMs based on the $\mathsf{M}_{\mathrm{a}}(x)$'s and given by:
\begin{equation}
 \label{locaver}
f(x)\mapsto\check{f}(x)=\int_{X}f(x^{\prime})\,\mathrm{tr}(\rho_{\mathrm{r}}(x)\rho_{\mathrm{a}}(x^{\prime}))\,\mathrm{d}\nu(x^{\prime})\,.
\end{equation}
The map \eqref{locaver} represents in general a regularization of the original, possibly extremely singular, $f$. Standard cases are for $\rho_{\mathrm{a}}= \rho_{\mathrm{r}}$, in particular for rank-one density matrices (coherent states).

\subsection{POVMs and quantum measurement}

We base the content of this section on \cite{gazeau2015positive}, in which  the analogy of the present formalism with quantum measurement was discussed. 
In quantum mechanics, a physically relevant $A_f$ is a self-adjoint operator whose expectation value is the ``unsharp'' representation \cite{holevo2011probabilistic,busch2016quantum} 
\begin{align}
	\label{measexpect}
	\mathrm{tr}\left(\rho_m A_f\right) =  \int_{X}f(x)\,\mathrm{tr}(\rho_m\rho(x))\,\mathrm{d}\nu(x)\, .  
\end{align}
In compliance with the presentation given in Section \ref{defpovm},  the operator $$\rho_m =\sum_i q_i \ket{\phi_i} \bra{\phi_i}$$ is here the density matrix describing the mixing of the pure states $\ket{\phi_i}$, each associated with the outcome probability $q_i$. On the other hand, the expectation value \eqref{measexpect} of any quantum observable stems from the probability distribution $\mathrm{tr}\left(\rho_m \rho(x)\right)$. We note that, unlike in \eqref{locaver}, $\rho_m$ does not depend on any variable. Since the family of density operators used to construct our quantization procedure can provide the choice $\rho_m = \rho_r(x_0)$ for some $x_0$, we obtain a quantum measurement based on the pair of POVMs $\left(\rho_m, \rho(x)\right)$, which is equivalent to the pair $\left(\rho_r,\rho_a\right)$ introduced in \eqref{locaver}. We also note that the unsharp representation  can be associated with the ``sharp'' representation
\begin{equation}
\label{measspecdec}
\mathrm{tr}\left(\rho_m A_f\right) =  \int_{\R}\lambda\,\mathrm{tr}(\rho_m\,\mathrm{d}E_f (\lambda))\; ,
\end{equation}
coming from the self-adjoint operator $A_f$ spectral decomposition
\begin{equation}
\label{spmeas}
A_f = \int_{\R} \lambda\, \ud E_f(\lambda)\; , 
\end{equation}
with spectral measure $\mathrm{d}E_f(\lambda)$. While the sharp representation is directly associated to the PV spectral measure $E_f$, a family $\rho(x)$ resolving the identity is needed in the unsharp representation. However, classical and quantum spectra can be considered identical, regardless of the differences between their PV measure and POVMs. Finally, we observe that both the quantum operator $A_f$ and its semi-classical counterpart \eqref{locaver} involve probability distributions  with respect to some  POVMs. Therefore, quantifying an object and measuring it are done through a sole procedure.

\section{Integral quantization, Toeplitz quantization and Naimark dilation}
\label{toepnaim}
The so-called Berezin-Klauder-Toeplitz quantization (see \cite{landsman2006} for a clear presentation of it \footnote{We stress that the quantization procedure presented in \cite{landsman2006} is based on group action while our presentation does not.}), and in particular coherent state (CS) quantization, is one kind of integral quantization. In this section, we describe their relationship, which will allow us to find the Naimark dilation of our quantum operators. 

The   construction of  family of operators  solving the identity in \eqref{resunitM} in the CS case is implemented along the following scheme. Let us equip the set $X$ with another measure $\mu$ and consider the Hilbert space $L^2(X,\ud\mu(x))$ of complex-valued functions on $X$ which are Lebesgue square integrable with respect to $\mu$. One then chooses   an orthonormal set 
$\mathcal{O}$ of functions $\phi_i(x)$ (set aside the question of evaluation map in their respective equivalence classes), satisfying the finiteness and positiveness conditions  
\begin{equation}
	\label{CSNx}
	0< \mathcal{N}(x)= \sum_i \vert\phi_i(x)\vert^2 < \infty\quad (\mathrm{a.e.})\,, 
\end{equation}
and a ``companion''  Hilbert space $\mathcal{H}$ (the space of quantum states)  with orthonormal basis $\{  | e_i\rg \}$ in one-to-one correspondence $\{  | e_i\rg \leftrightarrow \phi_i\}$ with the elements of  $\mathcal{O}$. There results a family $\mathcal{C}$ of unit vectors $|x\rangle$ in $\mathcal{H}$, named, in a generalised sense, coherent states (CS). These vectors  are labelled by elements of $X$ and  resolve the identity operator in $\mathcal{H}$: 
\begin{equation}
	\label{defcsL2}
	X\ni x \mapsto | x\rg =\frac{1}{\sqrt{\mathcal{N}(x)}}\sum_{i}\overline{\phi_i(x)} | e_i\rg\in \mathcal{H}\, . 
\end{equation}
\begin{equation}
	\label{resun}
	\lg x| x\rg = 1\, , \quad \int_{X} \, | x\rg\lg x| \, \ud\nu(x)= \bu_{\mathcal{H}}\, ,\quad  \ud\nu(x)= \mathcal{N}(x)\, \ud\mu(x)\,.
\end{equation}

We now explain the meaning of  the CS quantization within this framework. Let us denote by $\mathbb{P}_{\mathfrak{o}}$ the orthogonal projector mapping  $L^2(X,\ud\mu(x))$ onto the sub-Hilbert space $\mathcal{H}_{\mathfrak{o}}$, which is the closure of the linear span of the elements of $\mathcal{O}$. Clearly, $\mathbb{P}_{\mathfrak{o}}$ is the identity operator on $\mathcal{H}_{\mathfrak{o}}$. Now, we identify $\mathcal{H}_{\mathfrak{o}}\equiv \mathcal{H}$ through a unitary map $U_\mathfrak{o}$ such that we have
\begin{equation}
	\label{identphie}
	U_{\mathfrak{o}}|e_i\rg =\phi_i\,.
\end{equation}
Let $f(x)$ be a function defining the multiplication operator in $L^2(X,\ud\mu(x))$:
\begin{equation}
	\label{defMf}
	M_f: \phi \mapsto f\,\phi \,.
\end{equation}
\begin{prop} (Toeplitz quantization)\label{Toeplitz quantization}
	We have the identity
	\begin{equation}
		\label{MfidAf}
		\mathbb{P}_{\mathfrak{o}}M_f\mathbb{P}_{\mathfrak{o}}\phi_i= U_{\mathfrak{o}}A_fU^{-1}_{\mathfrak{o}}\phi_i \;, \quad\forall i\in\mathbb{N} \;,
	\end{equation}
	where $A_f$ is the CS quantisation of $f$ as
	\begin{equation}
		\label{CSqAf}
		A_f= \int_{X} \,f(x)\, | x\rg\lg x| \, \ud\nu(x)\,. 
	\end{equation}
\end{prop}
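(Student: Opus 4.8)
The plan is to prove \eqref{MfidAf} by reducing it to an equality of matrix elements on the distinguished orthonormal families. By unitarity of $U_{\mathfrak{o}}$ and the identification \eqref{identphie}, the claimed identity $\mathbb{P}_{\mathfrak{o}}M_f\mathbb{P}_{\mathfrak{o}}\phi_i=U_{\mathfrak{o}}A_fU_{\mathfrak{o}}^{-1}\phi_i$ (note $U_{\mathfrak{o}}^{-1}\phi_i=|e_i\rangle$) is equivalent to
\[
\langle\phi_j|\,\mathbb{P}_{\mathfrak{o}}M_f\mathbb{P}_{\mathfrak{o}}\phi_i\rangle_{L^2(\mu)}=\langle e_j|A_f|e_i\rangle_{\calH}\qquad\text{for all }i,j,
\]
since $\{\phi_j\}$ spans $\mathcal{H}_{\mathfrak{o}}$ and $\{|e_j\rangle\}$ spans $\calH$. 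The right-hand side is meaningful through the weak-integral definition \eqref{CSqAf} as soon as the relevant integrals converge absolutely, which I take for granted under a mild hypothesis on $f$ (say $f\in L^\infty(X,\mu)$, or $f$ real-valued measurable with $\int_X|f|\,|\phi_i|\,|\phi_j|\,\ud\mu<\infty$ for all $i,j$), together with the standing assumption \eqref{CSNx} that $\NN$ is a.e.\ finite and strictly positive.

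First I would evaluate the left-hand side. Since $\phi_i\in\mathcal{H}_{\mathfrak{o}}$ we have $\mathbb{P}_{\mathfrak{o}}\phi_i=\phi_i$, hence $M_f\mathbb{P}_{\mathfrak{o}}\phi_i=f\phi_i$; and since $\phi_j\in\mathcal{H}_{\mathfrak{o}}$ the outer projector is transparent when paired with $\phi_j$, so
\[
\langle\phi_j|\,\mathbb{P}_{\mathfrak{o}}M_f\mathbb{P}_{\mathfrak{o}}\phi_i\rangle_{L^2(\mu)}=\langle\phi_j|f\phi_i\rangle_{L^2(\mu)}=\int_X\overline{\phi_j(x)}\,f(x)\,\phi_i(x)\,\ud\mu(x)\,.
\]
Equivalently, $\mathbb{P}_{\mathfrak{o}}M_f\mathbb{P}_{\mathfrak{o}}\phi_i$ is nothing but the orthogonal expansion of $f\phi_i$ on the basis $\{\phi_j\}$ of $\mathcal{H}_{\mathfrak{o}}$.

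Next I would evaluate the right-hand side from \eqref{CSqAf} and the explicit form \eqref{defcsL2} of the coherent states. Using $\langle e_j|x\rangle=\overline{\phi_j(x)}/\sqrt{\NN(x)}$ and $\langle x|e_i\rangle=\phi_i(x)/\sqrt{\NN(x)}$, and then $\ud\nu(x)=\NN(x)\,\ud\mu(x)$ from \eqref{resun},
\[
\langle e_j|A_f|e_i\rangle=\int_X f(x)\,\langle e_j|x\rangle\langle x|e_i\rangle\,\ud\nu(x)=\int_X f(x)\,\frac{\overline{\phi_j(x)}\,\phi_i(x)}{\NN(x)}\,\NN(x)\,\ud\mu(x)=\int_X\overline{\phi_j(x)}\,f(x)\,\phi_i(x)\,\ud\mu(x)\,,
\]
so the factor $\NN(x)$ cancels and the two expressions coincide. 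This proves \eqref{MfidAf} entrywise, hence as an identity of operators on the (dense) common domain spanned by the $\phi_i$'s, on which $\forall i\in\mathbb{N}$ is the natural quantification.

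The computation is elementary; the only genuine obstacle is of a bookkeeping nature. One must make sure the weakly-defined operator $A_f$ and the multiplication operator $M_f$ are being compared on a common invariant and dense domain — the finite linear span of the $\phi_i$ is the natural choice — so that ``all matrix elements agree'' legitimately upgrades to ``the operators agree''; and one must invoke the integrability hypothesis on $f$ to justify the cancellation of $\NN(x)$, and, if one prefers to substitute the basis expansion of $|x\rangle$ directly into \eqref{CSqAf} instead of working with scalar matrix elements, to justify interchanging the sum over $j$ with the integral via Fubini--Tonelli. Working directly with the scalar matrix elements $\langle e_j|A_f|e_i\rangle$ is the cleanest route and sidesteps any interchange-of-limits issue altogether.
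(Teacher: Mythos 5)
Your proposal is correct and follows essentially the same route as the paper's own proof: both sides are reduced to the matrix elements $\int_X \overline{\phi_j}(x)\,f(x)\,\phi_i(x)\,\ud\mu(x)$, with the factor $\mathcal{N}(x)$ from the coherent states cancelling against $\ud\nu=\mathcal{N}\,\ud\mu$. Your added remarks on domains and on avoiding the sum--integral interchange by working with scalar matrix elements are a mild tightening of the same computation, not a different argument.
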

\begin{proof}
	We will compare the respective matrix elements of the left- and right-hand sides of \eqref{MfidAf} with respect to the basis $\{|e_i \rg\}$ of $\mathcal{H}$. Let us start with the action of $U_{\mathfrak{o}} A_f U^{-1}_{\mathfrak{o}}$ on $\phi_i$:
	\begin{align*}
		U_{\mathfrak{o}} A_f U^{-1}_{\mathfrak{o}} \phi_i  &=U_{\mathfrak{o}} A_f|e_i\rg = U_{\mathfrak{o}}\int_{X} \,f(x)\, | x\rg\lg x|e_i\rg \, \ud\nu(x) \\
		&=U_{\mathfrak{o}}\sum_{jk}  \int_{X} \,f(x)\, \overline{\phi_j}(x)\, \phi_k(x) \, \ud\mu(x)\, |e_j \rg\lg e_k |e_i\rg\\
		&=U_{\mathfrak{o}} \sum_{j}  \int_{X} \,f(x)\, \overline{\phi_j}(x)\, \phi_i(x) \, \ud\mu(x)\, |e_j\rg\\
		&= \sum_{j}\left(A_f\right)_{ji}\,U_{\mathfrak{o}}|e_j\rg= \sum_{j}\left(A_f\right)_{ji}\,\phi_j.
	\end{align*}
	On the other hand, 
	\begin{align*}
		\mathbb{P}_{\mathfrak{o}}M_f\mathbb{P}_{\mathfrak{o}}\phi_i=\mathbb{P}_{\mathfrak{o}} M_f\phi_i  &= \sum_j \left(\phi_j, f\phi_i\right)_{L^2}\,\phi_j\\
		&=\sum_{j}  \phi_j\,\int_{X}  \overline{\phi_j}(x)\, f(x)\,\phi_i(x) \, \ud\mu(x)\\
		& =   \sum_{j}\left(A_f\right)_{ji}\,\phi_j\,.
	\end{align*}
\end{proof}

Note that $M_\Delta:=M_{\chi_\Delta}$ defines a projection-valued measure such that $M_f=\int f(x)\,\ud M_x$. Indeed, let $g\in L^2(X,\ud\mu(x))$ and $\mu_g(\cdot):=\int_{(\cdot)}|g(x)|^2\,\ud\mu(x)=\left(g|M_{\chi_{(\cdot)}} g\right)_{L^2}$ be the corresponding measure of which $|g|^2$ is the Radon-Nikodym derivative ($\ud\mu_g(x)=|g(x)|^2\,\ud\mu(x)$). Then, for every $g\in L^2(X,\ud\mu(x))$, 
\begin{align*}
	\int f(y)\,\ud\left( g| M_{\chi_{(\cdot)}}g\right)_{L^2}&=\int f(y)\,\ud\mu_g(y)=\int f(y)|g(y)|^2\,\ud\mu(y)\\ &=\left(g| f g\right)_{L^2}=\left( g|M_{f} g\right)_{L^2}  \,. 
\end{align*}
where $\ud\left( g|M_{\chi_{(\cdot)}} g\right)_{L^2}$ denotes integration with respect to the measure $\left( g|M_{(\cdot)}g\right)_{L^2}$.
Moreover, $M_{(\cdot)}:=$ $M_{\chi_{(\cdot)}}$ is a Naimark dilation of the POVM $U_{\mathfrak{o}}A_\Delta U^{-1}_{\mathfrak{o}}:=$ $U_{\mathfrak{o}}A_{\chi_\Delta}U^{-1}_{\mathfrak{o}}=\int_\Delta U_{\mathfrak{o}}|x\rg\lg x| U^{-1}_{\mathfrak{o}}\,\ud\nu(x)$ since, by proposition \ref{Toeplitz quantization}, $\mathbb{P}_{\mathfrak{o}}M_{\chi_\Delta} \mathbb{P}_{\mathfrak{o}}\phi_i=U_{\mathfrak{o}}A_{\chi_\Delta} U^{-1}_{\mathfrak{o}} \phi_i$.

In other words, and with slight abuse of language, every operator $A_f=\int f(x)\,\ud A_x$ is the projection of the operator $M_f=\int f(x)\,\ud M_x$ where $\ud A_x$ denotes integration with respect to the POVM $A_{\chi_\Delta}$ and $\ud M_x$ denotes integration with respect to the PV measure $M_{\chi_\Delta}$. In its turn, $M_{\chi_\Delta}$ is the Naimark's dilation of $A_{\chi_\Delta}$.

An alternative construction of families of $X$-labelled coherent states rests upon unitary transports of one unit-norm state $|\psi\rg$ picked in the Hilbert space $\mathcal{H}$. Let  $\{|e_i\rg\}$ be an orthonormal basis of $\mathcal{H}$ and let $U(x)$ be a family of $X$-labelled unitary operators on $\mathcal{H}$ which obey  square-integrability and orthonormality on $(X,\nu)$  in the sense given in the following.
\begin{prop}
	\label{proptoep}
	Let $\psi $ be arbitrarily picked in $\mathcal{H}$. The family of $X$-labelled unit-norm states in $\mathcal{H}$ defined by 
	\begin{equation}
		\label{defUxpsi}
		X \ni x\mapsto 
		|x\rg_{\psi}:= U(x)|\psi\rg \,,
	\end{equation}
	solves the identity in $\mathcal{H}$ with respect to the measure $\nu$ on $X$,
	\begin{equation}
		\label{UxpsiId}
		\int_{X}  |x\rg_{\psi}{}_\psi\lg x | \,\ud\nu(x)= \bu\,,
	\end{equation}
	if and only if the $X$-labelled family of unitary operators  $U(x)$ obeys square-integrability and orthonormality on $(X,\nu)$  in the following sense
	\begin{align}
		\label{sqintUjx}
		\delta_{ik}&= \int_{X} U_{ji}(x)\,\overline{U_{jk}}(x)\,\ud\nu(x)  \quad \forall\, j\,,\\
		\label{sqintUjlx} 0&=\int_{X}\left[U_{ji}(x)\,\overline{U_{lk}}(x) + U_{li}(x)\,\overline{U_{jk}}(x)\right]  \,\ud\nu(x)\quad \forall\,j,l\,, \ j\neq l\,,
	\end{align}
	where $U_{ij}(x):= \lg e_i|U(x)|e_j\rg$.
\end{prop}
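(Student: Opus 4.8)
The plan is to prove both implications at once by writing $|\psi\rangle$ in the basis $\{|e_i\rangle\}$ and computing the matrix elements of $\int_X|x\rangle_\psi\,{}_\psi\langle x|\,\mathrm{d}\nu(x)$. With $|\psi\rangle=\sum_m\psi_m|e_m\rangle$ and $U(x)|e_m\rangle=\sum_iU_{im}(x)|e_i\rangle$ one gets $\langle e_i|x\rangle_\psi=\sum_mU_{im}(x)\psi_m$, hence
\[
\langle e_i|\Big(\int_X|x\rangle_\psi\,{}_\psi\langle x|\,\mathrm{d}\nu(x)\Big)|e_k\rangle=\sum_{m,n}\psi_m\overline{\psi_n}\,\Gamma_{im,kn},\qquad \Gamma_{im,kn}:=\int_XU_{im}(x)\,\overline{U_{kn}(x)}\,\mathrm{d}\nu(x)\,.
\]
Under the square-integrability part of \eqref{sqintUjx} (i.e.\ $\int_X|U_{im}(x)|^2\mathrm{d}\nu(x)<\infty$) the operator on the left is a well-defined positive self-adjoint operator, and in the ``only if'' direction its existence is part of the hypotheses. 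So \eqref{UxpsiId} is equivalent to the family of scalar identities $\sum_{m,n}\psi_m\overline{\psi_n}\,\Gamma_{im,kn}=\delta_{ik}$, one for each pair $(i,k)$.

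For the ``if'' direction, assume \eqref{sqintUjx}--\eqref{sqintUjlx}. When $i=k$, \eqref{sqintUjx} with repeated first index $i$ and remaining indices $m,n$ gives $\Gamma_{im,in}=\delta_{mn}$, so the identity collapses to $\sum_m|\psi_m|^2=\|\psi\|^2=1$. When $i\neq k$, identifying the four indices of \eqref{sqintUjlx} with $i,m,k,n$ respectively turns it into $\Gamma_{im,kn}+\Gamma_{km,in}=0$; contracting against $\psi_m\overline{\psi_n}$ shows that the $(i,k)$ and $(k,i)$ matrix elements of $\int_X|x\rangle_\psi\,{}_\psi\langle x|\,\mathrm{d}\nu(x)$ add to zero. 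Since this operator is self-adjoint, its $(k,i)$ entry is the conjugate of its $(i,k)$ entry, so the $(i,k)$ entry has vanishing real part; in the real Hilbert space $\mathcal{H}$ that entry is real, hence zero. Thus $\int_X|x\rangle_\psi\,{}_\psi\langle x|\,\mathrm{d}\nu(x)=\bu$, which is \eqref{UxpsiId}.

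For the ``only if'' direction one exploits that $\psi$ is arbitrary. Fixing $i$, the diagonal identity reads $\langle\psi|G_i|\psi\rangle=\|\psi\|^2$ with $G_i$ the self-adjoint matrix $(G_i)_{nm}=\Gamma_{im,in}$; a quadratic form equal to $\|\psi\|^2$ for every $\psi$ forces $G_i=\bu$, i.e.\ $\Gamma_{im,in}=\delta_{mn}$, which is \eqref{sqintUjx}. For $i\neq k$ the identity reads $\langle\psi|H_{ik}|\psi\rangle=0$ for every $\psi$, with $(H_{ik})_{nm}=\Gamma_{im,kn}$, so the self-adjoint part of $H_{ik}$ vanishes, $\Gamma_{im,kn}+\overline{\Gamma_{in,km}}=0$; using $\overline{\Gamma_{in,km}}=\Gamma_{km,in}$ this is exactly \eqref{sqintUjlx}.

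The computation is essentially bookkeeping; the two points requiring care are that the ``only if'' half genuinely needs $\psi$ to sweep out all of $\mathcal{H}$, so that a quadratic form pins down the self-adjoint part of an operator, and that the off-diagonal cancellation in the ``if'' half rests on combining \eqref{sqintUjlx} with self-adjointness of $\int_X|x\rangle_\psi\,{}_\psi\langle x|\,\mathrm{d}\nu(x)$ and the real structure of $\mathcal{H}$, the latter being what kills the otherwise only-purely-imaginary off-diagonal entry.
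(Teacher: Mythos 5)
Your proof is correct and follows essentially the same route as the paper's: expand $\psi$ in the basis $\{|e_i\rg\}$, reduce the resolution of the identity to a family of quadratic-form identities in $\psi$, and use the arbitrariness of $\psi$ to pin down the Hermitian operators $\mathrm{M}^{j}$ and $\mathrm{M}^{j,l}+\mathrm{M}^{l,j}$ of the paper (your $G_i$ and the self-adjoint part of $H_{ik}$). The one place you go beyond what is written in the paper is the ``if'' direction: the paper only carries out the derivation of \eqref{sqintUjx}--\eqref{sqintUjlx} from \eqref{UxpsiId} and leaves the converse implicit, whereas you observe that recovering the vanishing of each individual off-diagonal entry from $\Gamma_{im,kn}+\Gamma_{km,in}=0$ genuinely requires the real structure of $\mathcal{H}$, since self-adjointness of $\int_X|x\rg_{\psi}{}_\psi\lg x|\,\ud\nu(x)$ alone only kills the real part of that entry. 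This is a worthwhile point: over $\mathbb{C}$, condition \eqref{sqintUjlx} is strictly weaker than $\mathrm{M}^{j,l}=0$ and the stated conditions would only be necessary, so the equivalence should be read in the real setting in which the paper ultimately applies it (the rotation matrices on $\mathbb{S}^1$).
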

\begin{proof}
	By expanding $\psi= \sum_i a_i\, |e_i\rg$, we have,
	\begin{equation*}
		|x\rg_{\psi}= U(x)|\psi\rg= \sum_{i} a_i\, U(x)\,|e_i\rg = \sum_{ij} a_i\, U_{ji}(x)\,|e_j\rg\,,
	\end{equation*}
	and so 
	\begin{align*}
		|x\rg_{\psi}{}_\psi\lg x | &=  \sum_{ijkl} a_i\bar{a}_k\, U_{ji}(x)\,\overline{U_{lk}}(x)\,|e_j\rg\lg e_l|\\
		&=\sum_{ i k} a_i\bar{a}_k\, \sum_jU_{ji}(x)\,\overline{U_{jk}}(x)\,|e_j\rg\lg e_j| + \sum_{ik} a_i\bar{a}_k\, \sum_{j\neq l}U_{ji}(x)\,\overline{U_{lk}}(x)\,|e_j\rg\lg e_l|\,.
	\end{align*}
	We now integrate the previous expression and impose \eqref{UxpsiId}. From the completeness of the basis, $\sum_j |e_j\rg\lg e_j|= \bu$, and linear independence of the operators $|e_j\rg\lg e_l|$, we infer  
	\begin{align}
		\label{1ikjl}
		1&=\sum_{ik} a_i\bar{a}_k\,  \int_{X}U_{ji}(x)\,\overline{U_{jk}}(x) \,\ud\nu(x) \quad \forall\, j\,,\\
		\label{0ikjl} 0&= \sum_{ik} a_i\bar{a}_k\,  \int_{X}U_{ji}(x)\,\overline{U_{lk}}(x)  \,\ud\nu(x)  \quad \forall \, j, l, \ j\neq l \,,
	\end{align}
	for any unit-norm complex sequence  $(a_i)\in \ell^2$. Let us consider the two operators $\mathrm{M}^{j}$ and  $\mathrm{M}^{j,l}$, $j\neq l$,  with matrix  elements with respect to the basis $\{e_i\}$,
	\begin{equation}
		\label{hermat}
		\mathrm{M}^{j}_{ki}=\int_{X}U_{ji}(x)\,\overline{U_{jk}}(x)  \,\ud\nu(x) \,, \quad \mathrm{M}^{j,l}_{ki}=\int_{X}U_{ji}(x)\,\overline{U_{lk}}(x)  \,\ud\nu(x) \,.
	\end{equation} 
	The  operator $\mathrm{M}^{j}$ is Hermitian and the condition \eqref{1ikjl} means that its mean value obeys $\lg \psi| \mathrm{M}^{j} |\psi\rg =1$ for any normalised $|\psi\rg\in \mathcal{H}$. There results that $\mathrm{M}^{j}=\bu$, and so 
	\begin{equation*}
		\int_{X}U_{ji}(x)\,\overline{U_{jk}}(x)  \,\ud\nu(x)= \delta_{ik}\quad \forall\,j\,.
	\end{equation*}
	The operator $\mathrm{M}^{j,l}$  obeys $\left(\mathrm{M}^{j,l}\right)^{\dag}=\overline{\mathrm{M}}^{l,j}$, and so the operator $\mathrm{M}^{j,l} + \mathrm{M}^{l,j}$ is Hermitian.  The condition \eqref{0ikjl} means that its mean value obeys $\lg \psi| \mathrm{M}^{j,l} + \mathrm{M}^{l,j} |\psi\rg =0$ for any normalised $|\psi\rg\in \mathcal{H}$. There results that $\mathrm{M}^{j,l} + \mathrm{M}^{l,j}=\mathbb{0}$, and so 
	\begin{equation*}
		\int_{X}\left[U_{ji}(x)\,\overline{U_{lk}}(x) + U_{li}(x)\,\overline{U_{jk}}(x)\right]  \,\ud\nu(x)=0\quad \forall\,j,l\,, \ j\neq l\,.
	\end{equation*}
\end{proof}
The above result extends to density operators.
\begin{prop}
	Let $\rho$ be a density operator on $\mathcal{H}$, and $\rho(x):= U(x)\,\rho\,U^{\dag}(x)$  its unitary transport   $U(x)$ which has square-integrability and orthonormality properties \eqref{sqintUjx} and  \eqref{sqintUjlx}. We then have the resolution of the identity:
	\begin{equation}
		\label{UxrhoUxiId}
		\int_{X} \rho(x)\,\ud\nu(x)= \bu\,.
	\end{equation}
\end{prop}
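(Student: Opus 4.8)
The plan is to reduce the statement to the previous proposition by writing $\rho$ in its spectral form. Since $\rho$ is a density operator on $\mathcal{H}$, it admits a spectral decomposition $\rho=\sum_k p_k\,|\psi_k\rangle\langle\psi_k|$, where $\{|\psi_k\rangle\}$ is an orthonormal set in $\mathcal{H}$, $p_k\geq 0$, and $\sum_k p_k=1$ (the sum being finite since $\mathcal{H}$ is finite-dimensional here). Conjugating by the unitary $U(x)$ preserves this convex structure, so that
\begin{equation*}
\rho(x)=U(x)\,\rho\,U^{\dag}(x)=\sum_k p_k\,U(x)\,|\psi_k\rangle\langle\psi_k|\,U^{\dag}(x)=\sum_k p_k\,|x\rangle_{\psi_k}{}_{\psi_k}\!\langle x|\,,
\end{equation*}
using the notation $|x\rangle_{\psi}=U(x)|\psi\rangle$ from \eqref{defUxpsi}.

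First I would invoke Proposition \ref{proptoep}: for \emph{each} fixed unit-norm vector $|\psi_k\rangle\in\mathcal{H}$, the square-integrability and orthonormality conditions \eqref{sqintUjx} and \eqref{sqintUjlx} on the family $U(x)$ are exactly the hypotheses needed to conclude
\begin{equation*}
\int_X |x\rangle_{\psi_k}{}_{\psi_k}\!\langle x|\,\ud\nu(x)=\bu\,.
\end{equation*}
Note that conditions \eqref{sqintUjx}--\eqref{sqintUjlx} are intrinsic to $U(x)$ and do not depend on which state is transported, so the same identity holds simultaneously for every $k$. Then I would integrate the spectral expansion of $\rho(x)$ term by term against $\ud\nu(x)$, exchanging the (finite) sum with the integral:
\begin{equation*}
\int_X \rho(x)\,\ud\nu(x)=\sum_k p_k\int_X |x\rangle_{\psi_k}{}_{\psi_k}\!\langle x|\,\ud\nu(x)=\sum_k p_k\,\bu=\Bigl(\sum_k p_k\Bigr)\bu=\bu\,,
\end{equation*}
which is precisely \eqref{UxrhoUxiId}.

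The only point requiring a little care — and the main (minor) obstacle — is the interchange of summation and integration when expanding $\rho(x)$. In the finite-dimensional setting assumed here this is immediate, since the spectral sum has finitely many terms. In a more general separable Hilbert space one would instead argue weakly: for any $|\chi\rangle\in\mathcal{H}$, the scalar identity $\langle\chi|\int_X\rho(x)\,\ud\nu(x)|\chi\rangle=\sum_k p_k\int_X |{}_{\psi_k}\!\langle x|\chi\rangle|^2\,\ud\nu(x)=\sum_k p_k\langle\chi|\chi\rangle=\langle\chi|\chi\rangle$ follows by Tonelli's theorem (all integrands non-negative), and since this holds for all $|\chi\rangle$ one concludes the operator identity. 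Either way the proof is short; its content is entirely carried by Proposition \ref{proptoep} together with the linearity of the conjugation $\rho\mapsto U(x)\rho U^{\dag}(x)$ and of the integral.
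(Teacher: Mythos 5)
Your proposal is correct and follows essentially the same route as the paper: decompose $\rho$ into a convex sum of rank-one projectors, apply Proposition \ref{proptoep} to each unitarily transported projector $|x\rg_{\psi_k}{}_{\psi_k}\lg x|$, and sum using $\sum_k p_k=1$. The extra remarks you add on exchanging the sum and the integral (trivial in finite dimension, weak argument via Tonelli otherwise) are a welcome refinement of the paper's terser argument but do not change the substance.
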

\begin{proof}
	From the decomposition of   $\rho$ in a sum of rank-one projectors, 
	\begin{equation*}
		\rho= \sum_i p_i\, |\psi_i\rg\lg\psi_i|\,, \quad \Vert \psi_i\Vert^2=1\,\quad \sum_i p_i =1\,, \quad 0\leq p_i \leq 1\,,
	\end{equation*}
	we have 
	\begin{equation*}
		\rho(x)= \sum_i p_i\, |x\rg_{\psi_i}{}_{\psi_i}\lg x |\,.
	\end{equation*}
	By applying \eqref{UxpsiId} to each CS projectors in this sum, we obtain 
	\begin{equation*}
		\int_{X} \rho(x)\,\ud\nu(x)= \sum_i p_i \,\bu = \bu\,.
	\end{equation*}
\end{proof}
Let us now understand the Toeplitz quantization within this alternative approach. Implementing  the construction  \eqref{defUxpsi} with the choice $\psi = e_j$ \underline{and} $U^{\dag}(x)$, we obtain
\begin{equation}
	\label{Uej}
	|x\rg_{e_j}= \sum_i \overline{U_{ji}}(x)\,|e_i\rg\,.
\end{equation}
We denote by $\mathcal{O}_j=\{U_{ji} \}$ the orthonormal system in $L^2(X,\ud \nu(x))$ formed by the functions $U_{ji}(x)$ at fixed $j$ according to the condition \eqref{sqintUjx}, and by $\mathbb{P}_{\mathfrak{o}_j}$ the corresponding orthogonal projector mapping  $L^2(X,\ud\nu(x))$ onto the sub-Hilbert space $\mathcal{H}_{\mathfrak{o}_j}$ which is the closure of the linear span of the elements of $\mathcal{O}_j$. With the identification \eqref{identphie}, \textit{i.e.} 
$U_{\mathfrak{o}_j}|e_i\rg = U_{ji}$, relation \eqref{MfidAf} holds under the form,
\begin{equation}
	\label{PMAPj}
	\mathbb{P}_{\mathfrak{o}_j}M_f\mathbb{P}_{\mathfrak{o}_j}U_{ji}= U_{\mathfrak{o}_j}A^{(j)}_{f} U^{-1}_{\mathfrak{o}_j}U_{ji}\,, \quad A^{(j)}_{f}= \int_{X}f(x) \,|x\rg_{e_j}{}_{e_j}\lg x|\,\ud\nu(x)\,.
\end{equation}
We now extend this result to the integral quantization with a density operator $\rho$ on $\mathcal{H}$. We choose the orthonormal basis $\{|e_i\rg\}$ of the latter as made of the eigenstates of $\rho$ (with possible null eigenvalue), 
\begin{equation}
	\label{rhoei}
	\rho= \sum_i p_i\, |e_i\rg\lg e_i|\,, \quad \sum_i p_i =1\,, \quad 0\leq p_i \leq 1\,.
\end{equation}
Clearly we have, 
\begin{equation}
	\label{AfAfj}
	A^{(\rho)}_{f}= \int_{X}f(x) \,\rho(x)\,\ud\nu(x)= \sum_j p_j\int_{X}f(x) \,|x\rg_{e_j}{}_{e_j}\lg x|\,\ud\nu(x)= \sum_j p_j A^{(j)}_{f}\,.
\end{equation}
By linearity, from \eqref{PMAPj} we infer
\begin{equation}
	\label{PMAPrho}
	\sum_j p_j\mathbb{P}_{\mathfrak{o}_j}M_f\mathbb{P}_{\mathfrak{o}_j}= \sum_j p_j  U_{\mathfrak{o}_j}A^{(j)}_{f}U^{-1}_{\mathfrak{o}_j} \,. 
\end{equation}
Let us complete \eqref{sqintUjx} by imposing the full orthonormality constraint on the matrix elements $U_{ij}(x)$:
\begin{align}
	\label{sqintUjix}
	\delta_{jl}\delta_{ik}&= \int_{X} U_{ji}(x)\,\overline{U_{lk}}(x)\,\ud\nu(x)  \quad \forall\,i, j,k,l\,.
\end{align}
It follows that  $\mathbb{P}_{\mathfrak{o}_j}\mathbb{P}_{\mathfrak{o}_l}= 0$ $\forall j\neq l$.  Moreover, \eqref{AfAfj} becomes the direct sum
\begin{equation}
	\label{AfAfjdir}
	A^{(\rho)}_{f}= \bigoplus_j p_j A^{(j)}_{f}\,. 
\end{equation}
Defining  the  direct sum of unitary operators $U_{\mathfrak{o}}=  \bigoplus_j U_{\mathfrak{o}_j}$ we write \eqref{PMAPrho} as
\begin{equation}
	\label{}
	\bigoplus_j p_j\mathbb{P}_{\mathfrak{o}_j}M_f \mathbb{P}_{\mathfrak{o}_j}= \bigoplus_j p_j  U_{\mathfrak{o}_j}A^{(j)}_{f}U_{\mathfrak{o}_j}^{-1}
	= U_{\mathfrak{o}} A^{(\rho)}_{f} U^{-1}_{\mathfrak{o}}
\end{equation}
where we recover an analogous result about Naimark dilation in the case of density matrices.


\section{Integral POVM quantization of fonctions on the circle}
\label{intquant}

Let us now describe the integral quantization for building observables acting on the two-dimensional real Hilbert space, \textit{i.e.}, the Euclidean plane. In this section, we give an overview of the material developed in  \cite{bergeron2019orientations} on the integral quantization of functions on  the unit circle. We describe  rays  in the Euclidean plane, and real POVMs built from density matrices in dimension 2. We then discuss the integral quantization procedure based on the group of rotations SO$(2)$.

\subsection{Euclidean plane with Dirac notations}
\label{dirnotR2}

First, a quantum state can be decomposed into a sum of pure states, each identified with a unit vector. In the specific case of $\R^2$, the polar angle $\phi \in [0,2\pi)$ is associated with the unit vector $\widehat{\mathbf{u}}_{\phi}$. This unit vector corresponds to the pure state $|\phi\rg := \left| \widehat{\mathbf{u}}_{\phi}\right\rg$. In particular, the two axes of the Euclidean plane can be described as a ``horizontal orientation" state $\widehat{\boldsymbol{\imath}} = |0\rg$ and a ``vertical orientation" state $\widehat{\boldsymbol{\jmath}}=\left| \dfrac{\pi}{2}  \right\rangle$. The plane is depicted in figure \ref{figure quantum states}. 

\begin{figure}
\begin{center}
\setlength{\unitlength}{0.15cm} 
\begin{picture}(60,60)
\put(10,20){\vector(1,0){30}} 
\put(10,20){\vector(0,1){30}} 
\put(43, 24){\makebox(0,0){$\widehat{\boldsymbol{\imath}} = |0\rangle \equiv \begin{pmatrix}
      1    \\
      0  
\end{pmatrix}$}}
\put(29, 52){\makebox(0,0){$|\phi\rangle = \begin{pmatrix}
      \cos\phi    \\
      \sin\phi  
\end{pmatrix}$}}  
\put(45, 41){\makebox(0,0){\footnotesize (quantum) state in ``orientation representation''}} 
\put(10, 16.5){\makebox(0,0){$O$}} 
\put(30, 10){\makebox(0,0){$\langle  0 | 0  \rangle = 1 =\left\langle \frac{\pi}{2} \right| \left.\frac{\pi}{2}\right\rangle\, , \quad \langle 0 \left| \frac{\pi}{2} \right\rangle = 0$}}
\put(30, 1){\makebox(0,0){$\bu = | 0  \rangle \langle  0 |  +  \left| \frac{\pi}{2} \right\rangle \left\langle  \frac{\pi}{2} \right| \ \Leftrightarrow \ \left( \begin{array}{cc}
1 & 0\\ 0 & 1 \end{array} \right) = \left( \begin{array}{cc}
1 & 0\\ 0 & 0 \end{array} \right) + \left( \begin{array}{cc}
0 & 0\\ 0 & 1 \end{array} \right)$}} 
\put(8, 52){\makebox(0,0){$\widehat{\boldsymbol{\jmath}} = |\pi/2\rangle =\begin{pmatrix}
      0    \\
      1  
\end{pmatrix}$}} 
\put(16, 38){\makebox(0,0){$1$}} 
\put(20, 26){\makebox(0,0){$\phi$}}
\put(14,20){\oval(10,15)[tr]}
\thicklines 
\put(10,20){\vector(1,2){14}} 
\label{figure quantum states}
\end{picture}
\end{center}
\caption{The Euclidean plane and its unit vectors viewed as pure quantum states in Dirac ket notations.}
\label{R2fig}
\end{figure}

Therefore, a pure state in the horizontal-vertical representation can be decomposed as
\begin{equation}
\label{Fock}
| \phi \rangle = \cos{\phi} \,| 0 \rangle  + \sin{\phi}\, \left| \frac{\pi}{2} \right\rangle\, , \quad \lg 0 | \phi\rg = \cos \phi\, , \qquad \left\lg \frac{\pi}{2} \right|\phi\rg =  \sin\phi\, ,
\end{equation}
with the associated probability distributions $\phi\mapsto \cos^2 \phi$  and $\phi\mapsto \sin^2 \phi$ on the unit circle equipped with the measure $\ud\phi/\pi$. 
 One also have  the  unsharp \textit{unit circle} representation of a pure state given by the trigonometric function
\begin{equation}
\label{uncirrep}
\lg \eta | \phi\rg = \cos (\phi -\eta)\, .
\end{equation}
with associated probability distribution families 
\begin{equation}
\label{probdistuncirc}
\eta\mapsto \mathcal{P}_{\phi}(\eta) = \vert \lg \eta| \phi\rg\vert^2= \cos^2(\phi-\eta)\, , \quad \int_0^{2\pi} \mathcal{P}_{\phi}(\eta) \, \frac{\ud \eta}{\pi} = 1\, . 
\end{equation}
To  $| \phi\rangle$ corresponds the orthogonal projector  $E_{\phi} = | \phi \rangle \langle \phi |$  (also called pure state),
\begin{align}
\label{projtheta} 
E_{\phi} = \begin{pmatrix}
 \cos{\phi} \\
 \sin{\phi}\end{pmatrix} \begin{pmatrix}
\cos{\phi} &\sin{\phi} \end{pmatrix}  = \begin{pmatrix}
\cos^2{\phi} & \cos{\phi} \sin{\phi}  \\
\cos{\phi} \sin{\phi}  & \sin^2{\phi} \end{pmatrix} 
=\mathcal{R}(\phi) |0\rg\lg 0|\mathcal{R}(-\phi) \, , 
\end{align}
where $\mathcal{R}(\phi)=\begin{pmatrix}
  \cos\phi    &  -  \sin\phi  \\
   \sin\phi   &   \cos\phi 
\end{pmatrix}$.
In contrast to pure states the mixed states are  defined by two-dimensional density matrices of the form
\begin{equation}
\label{dens-a-bM}
\rho := \mathsf{M}(a,b) = \begin{pmatrix}
   a   &  b \\
    b  &  1-a
\end{pmatrix}\,,  \quad 0\leq a\leq 1\, , \quad \Delta:= \det \rho = a(1-a)-b^2 \geq 0\, . 
\end{equation}

From the spectral decomposition of $\rho$
\begin{equation}
\label{specrho}
\rho = \lambda E_{\phi} + (1-\lambda) E_{\phi + \pi/2}\, , 
\end{equation}
we only have to consider the upper half-plane, \textit{i.e.},  $0\leq \phi \leq \pi$. The highest eigenvalue of $\rho$ is $\lambda= \frac{1}{2}(1+\sqrt{1-4\Delta})$, such that $1/2\leq \lambda \leq 1$.  
Defining $r:= 2\lambda -1$, $0\leq r\leq 1$,  
a real density matrix takes the form
\begin{align}
	\label{specrhor}
	\rho = \left(\frac{1+r}{2}\right) E_{\phi} + \left(\frac{1-r}{2}\right) E_{\phi+\pi/2} \;,
\end{align}
and, in polar coordinates $(r,\phi)$, we obtain
\begin{equation}
\label{standrhomain}
\rho\equiv\rho_{r,\phi}=\frac{1}{2} \bu + \frac{r}{2}\mathcal{R}(\phi)\begin{pmatrix}
  1    &  0  \\
  0    &  -1
\end{pmatrix}\mathcal{R}(-\phi)= \begin{pmatrix}
  \frac{1}{2}  + \frac{r}{2}\cos2\phi  &   \frac{r}{2}\sin2\phi  \\
\frac{r}{2}\sin2\phi    &   \frac{1}{2}  - \frac{r}{2}\cos2\phi
\end{pmatrix}\, . 
\end{equation}
In particular, we retrieve the projector on the unit vector, $ \rho_{1,\phi}= E_{\phi}$.

Note that the parameter $r$ encodes  the distance of $\rho$ to the pure state $E_{\phi}$ while $1-r$ measures the degree of ``mixing" \cite{bergeron2019orientations}. Its statistical interpretation is given by the von Neumann entropy  defined as
\begin{equation}
\label{VNentr}
S_{\rho}:= -\mathrm{Tr}(\rho\,\ln\rho)= -\lambda \ln\lambda - (1-\lambda)\ln(1-\lambda)= -\frac{1+r}{2} \ln\frac{1+r}{2} - \frac{1-r}{2} \ln\frac{1-r}{2}\, .  
\end{equation}
As a function of $\lambda \in [1/2,1]$, $S_{\rho}$ is non-negative, concave and symmetric with respect to its maximum value $\log 2$ at $\lambda=1/2$, which corresponds to $r=0$, \textit{i.e.}, $\rho_0\equiv \bu/2$, which describes the state of  completely random orientations.

\subsection{Integral POVM quantization}
\label{sec:integquant}
The measure space $(X,\ud \nu(x))$ for the Euclidean plane is the unit circle with its uniform (Lebesgue) measure:
\begin{equation}
\label{measS1}
X= \mathbb{S}^1\, , \quad \mathrm{d}\nu(x) = \frac{\ud \phi}{\pi}\, , \quad \phi \in [0, 2\pi)\, .
\end{equation}
$\mathbb{S}^1$ is here viewed  as the abelian group SO$(2)$ of rotations in the plane. As a consequence, the density operator \eqref{standrhomain} is manifestly covariant under rotation
\begin{equation}
\label{rotrhomain}
 \rho_{r,\phi_0}(\phi)= \mathcal{R}\left(\phi\right) \rho_{r,\phi_0}\mathcal{R}\left(-\phi\right)=  \rho_{r,\phi_0+\phi}\, , \quad 0\leq \phi< 2\pi\, . 
\end{equation}
Additionally, one easily proves by elementary integration on matrix elements that  this family solves the identity 
\begin{equation}
\label{margomegamain}
\int_0^{2\pi} \rho_{r,\phi+\phi_0} \, \frac{\ud\phi}{\pi}= \bu\,.
\end{equation}
This property allows to apply the procedure \eqref{iqmap} to quantize a function (or distribution) $f(\phi)$ on the circle. Its quantum counterpart  is  the 2$\times$2 matrix operator
\begin{equation}
\label{qtfrhor}
\begin{split}
f \mapsto A_f &= \int_0^{2\pi} f(\phi) \rho_{r,\phi+\phi_0} \, \frac{\ud\phi}{\pi}
= \begin{pmatrix}
  \lg f\rg  + \frac{r}{2}C_c\left(R_{\phi_0}f\right)  &   \frac{r}{2}C_s\left(R_{\phi_0}f\right) \\
\frac{r}{2}C_s\left(R_{\phi_0}f\right)   &   \lg f\rg - \frac{r}{2}C_c\left(R_{\phi_0}f\right)
\end{pmatrix}\\
&= \lg f\rg \,I + \frac{r}{2}\left[C_c\left(R_{\phi_0}f\right)  \,\sigma_3 + C_s\left(R_{\phi_0}f\right) \, \sigma_1\right]\,,
\end{split}
\end{equation}
where $ \lg f\rg:= \frac{1}{2\pi}\int_0^{2\pi}f(\phi)\,\ud\phi$ is the average of $f$ on the unit circle and $R_{\phi_0}(f)(\phi) := f(\phi-\phi_0)$. The symbols $C_c$ and $C_s$ are for the cosine and sine doubled angle Fourier coefficients of $f$,
\begin{equation}
\label{CcCs}
C_c(f) = \int_0^{2\pi} f(\phi) \cos2\phi \, \frac{\ud\phi}{\pi}\, , \quad C_s(f) = \int_0^{2\pi} f(\phi) \sin2\phi \, \frac{\ud\phi}{\pi}\, . 
\end{equation}

We observe that the quantisation of the components $(\cos(\phi-\phi_0), \sin(\phi-\phi_0))$ of the density matrix eigenvectors give $0$. Let us denote by $N_{\phi_0}$ the space of functions or distributions (in a certain sense) $T(\phi)$ on the circle which obey 
\begin{equation}
\label{Nphio}
 \int_0^{2\pi} T(\phi) \rho_{r,\phi+\phi_0} \, \frac{\ud\phi}{\pi} = 0\,.
\end{equation}
Such a space includes the orthogonal in $L_{\R}^2(\mathbb{S}^1,\ud\phi/\pi)$ to the $3$-dimensional subspace spanned by the functions $1$, $\cos2\phi$, and $\sin2\phi$. Thus,  the operator $A_f$ is the $\rho_{r,\phi+\phi_0}$-quantization of $f$ modulo any element of  $N_{\phi_0}$. Actually, by identifying the Euclidean space $\R^3$ with the real span $V_3$ of the orthonormal set in $L_{\R}^2(\mathbb{S}^1,\ud\phi/\pi)$,
\begin{equation}
\label{3basis}
V_3 = \mathrm{Span}\left\{e_0(\phi):=\frac{1}{\sqrt{2}}, e_1(\phi):=\cos2\phi, e_2(\phi):=\sin2\phi \right\}\,,
\end{equation}
the linear map \eqref{qtfrhor} yields a non-commutative version of $\R^3$. In particular, we have
\begin{align}
\label{q3basis0}
A_{e_0} &= \frac{\bu}{\sqrt{2}}\,,\\
 \label{q3basis1}A_{e_1}&= \frac{r}{2}[\cos 2\phi_0\,\sigma_3 + \sin 2\phi_0\,\sigma_1]\equiv  \frac{r}{2} \sigma_{2\phi_0}\,, \\
  \label{q3basis2} A_{e_2}&= \frac{r}{2}[-\sin 2\phi_0\,\sigma_3 + \cos 2\phi_0\,\sigma_1]\equiv  \frac{r}{2} \sigma_{2\phi_0+ \pi/2}\,,
\end{align}
with the non-zero commutation rule
\begin{equation}
\label{com12}
\left[A_{e_1},A_{e_2}\right]= -\frac{r^2}{2}\tau_2\, , \quad \tau_2:= \begin{pmatrix}
   0   &  -1  \\
    1  &  0
\end{pmatrix}\,. 
\end{equation}
In the above we have associated  with the unit vector $\widehat{\mathbf{u}}_{\phi}$ \eqref{Fock} the  symmetric matrix,
\begin{equation}
\label{sigphi}
\sigma_{\phi}:=  \cos \phi\, \sigma_{3} +  \sin \phi\, \sigma_1\equiv \overrightarrow{\boldsymbol{\sigma}}\cdot \widehat{\mathbf{u}}_{\phi}= \begin{pmatrix}
  \cos \phi    &  \sin \phi  \\
 \sin \phi     & - \cos \phi
\end{pmatrix} = \mathcal{R}(\phi)\,\sigma_3\, ,
\end{equation}
where we have introduced the  vector notation $\overrightarrow{\boldsymbol{\sigma}}= \sigma_3 \widehat{\boldsymbol{\imath}}+ \sigma_1\widehat{\boldsymbol{\jmath}}$. Note the commutation rule
\begin{equation}
\label{comppp}
\left[\sigma_{\phi},\sigma_{\phi^{\prime}}\right]= 2\sin(\phi-\phi^{\prime})\tau_2\,. 
\end{equation}
Also note that the $3$ matrices $\sigma_1 $, $\tau_2 $, $\sigma_3 $ span the Lie algebra $\mathfrak{sl}$(2,$\R$), while $\tau_2 $ is precisely the generator of rotations in the plane:
\begin{equation}
\label{genrottau}
\mathcal{R}(\phi) = e^{\tau_2\phi}\,.
\end{equation}
The spectral decomposition of the quantum observable $\sigma_{2\phi}$ precisely involves its eigen-orientations $\left|\phi\right\rg$ (eigenvalue $1$) and $\left| \phi+ \frac{\pi}{2}\right\rg$ (eigenvalue $-1$), 
\begin{equation}
\label{sig2phi}
\sigma_{2\phi} = \left|\phi\right\rg\left\lg\phi\right| - \left| \phi+ \frac{\pi}{2}\right\rg\left\lg \phi + \frac{\pi}{2}\right|=
 E_{\phi} - E_{\phi + \pi/2} \, .
\end{equation}
 In agreement with \eqref{specrho}, the density operator \eqref{standrhomain} is expressed in terms of $\sigma_{2\phi}$ as
\begin{equation}
\label{standrhomain1}
\rho_{r,\phi}= \frac{1}{2}\left( \bu + r \sigma_{2\phi}\right)\, . 
\end{equation}
Finally, note that the covariance requirement emerges from the rotational invariance of the density operator \eqref{rotrhomain}
\begin{equation}
\label{covpropcircle}
 \mathcal{R}\left(\theta\right) \,A_f\,\mathcal{R}\left(-\theta\right)= A_{R_{\theta}(f)}\,. 
\end{equation}
An interesting question concerns the existence of functions $f(\phi)$ in $V_3$  yielding density matrices through the quantization map \eqref{qtfrhor}. Solving this problem amounts to find $f(\phi) = a_0 + a_2 \cos2\phi + b_2 \sin2\phi$ such that $A_f = \rho_{s,\theta}$. Solutions are given from the relations
\begin{equation}
\label{frho}
a_0 = \frac{1}{2}\, , \quad
\left\lbrace\begin{array}{c}
   r\left(a_2\cos2\phi_0 -b_2 \sin2\phi_0\right)  =   s\cos2\theta  \\
   r\left(a_2\sin2\phi_0 + b_2 \cos2\phi_0\right)  =   s\sin2\theta
\end{array}\right.\,, \quad 0\leq r\,,\,s\leq 1\, .
\end{equation}
Putting $a_2 = l \cos2\theta_0$ and $b_2 = l \sin2\theta_0 $ with $l>0$, we obtain from above,
\begin{equation}
\label{frho1}
l = \frac{s}{r}\, , \quad \theta_0= \theta -\phi_0\,,
\end{equation}
which yields the solution
\begin{equation}
\label{frho2}
f(\phi)= \frac{1}{2} +\frac{s}{r}\,\cos2(\phi + \phi_0-\theta)\,. 
\end{equation}
The representation of a given mixed state as a continuous superposition of mixed states as
\begin{equation}
\label{frho3}
\rho_{s,\theta}= \int_0^{2\pi} \left[\frac{1}{2} +\frac{s}{r}\, \cos2\phi\right]\, \rho_{r,\phi + \theta} \, \frac{\ud\phi}{\pi} \,,
\end{equation}
which is convex for $r\geq 2s$, provides one more  illustration of a \textit{typical property of quantum-mechanical ensembles in comparison with their classical counterparts.} Note that if $r\geq 2s$ holds,  $\phi \mapsto \left[\frac{1}{2} +\frac{s}{r}\, \cos2\phi\right]$ is a  probability distribution and \eqref{frho3} can also be viewed as the average of these mixed states.

\subsection{Quantization and symbol calculus}

Here, we go forward in the algebraic and geometrical interpretation of the quantisation map \eqref{qtfrhor}, yielded by the POVM built from the   family of density matrices 
\begin{equation}
\label{famropovm}
\left\{\rho_{r,\phi+\phi_0}\,,\, 0\leq \phi< 2\pi\, \mathrm{mod}\,2\pi\right\}\,,
\end{equation}
with parameters $0\leq r\leq 1$ and $0\leq \phi_0< 2\pi\, \mathrm{mod}\,2\pi$, and restricted to the $3$-d Fourier realisation $V_3$ of the Euclidean space $\R^3$. Given a  vector $\vec{f} = f_1\bi + f_2\bj +f_0 \bk \in \R^3$, identified with $\vec{f} = f_1e_1 + f_2 e_2+f_0 e_0\in V_3$, its quantization is the quantum observable
\begin{equation}
\label{fAfsigma}
A_f =  f_1A_{e_1} + f_2 A_{e_2}+f_0 A_{e_0}=   f_1  \frac{r}{2} \sigma_{2\phi_0}  +  f_2\frac{r}{2} \sigma_{2\phi_0+\pi/2} + f_0 \frac{\bu}{\sqrt{2}}\,.
\end{equation}
Three basic matrices, namely the identity $\bu$ and the two real Pauli matrices $\sigma_1$ and $\sigma_3$,
 generate the \index{Jordan algebra} Jordan algebra $\mathcal{J}_3$ \footnote{A Jordan algebra is an (not necessarily associative) algebra over a field whose multiplication is commutative and satisfies the \emph{Jordan identity}:
$(xy)x^2 = x(yx^2)$ .} of all real symmetric $2\times2$ matrices. 
Any element  $A$ of the algebra  decomposes as
\begin{equation} 
A \equiv  \left( \begin{array}{cc}
a& b \\  b&  d \end{array}  \right) = \frac{a+d}{2}I + \frac{a-d}{2} \sigma_3 + b \sigma_1 
\equiv \alpha I + \delta \sigma_3 + \beta \sigma_1\, . \label{msym13}
\end{equation}
The product in this algebra is defined by
\begin{equation}
\mathcal{O''} = A\odot A' = \frac{1}{2} \left( A A' + A' A\right)\, ,
\label{jord13}
\end{equation}
which entails on the level of components $\alpha, \delta, \beta$, the relations : 
\begin{equation}
\alpha'' = \alpha \alpha' + \delta \delta' + \beta \beta', \  
\delta'' = \alpha \delta' + \alpha' \delta,\  \beta'' = \alpha \beta'+ \alpha' \beta\, .
\end{equation}
In particular we have
\begin{equation}
\sigma_1 \odot \sigma_3 = \sigma_3 \odot \sigma_1= 0\,, \quad \mbox{while}\quad \sigma_1  \sigma_3 = -\sigma_3 \sigma_1= \tau_2=\begin{pmatrix}
   0   &  -1  \\
   1   &  0
\end{pmatrix}\,.
\end{equation}
Equipped with the scalar product
\begin{equation}
\label{inprod}
\lg A|B\rg_{\mathcal{J}_3} = \mathrm{Tr}(AB)
\end{equation}
the algebra $\mathcal{J}_3$ is also a $3$-d real Hilbert space. To any $\phi \in [0,2\pi)\, \mathrm{mod}\, 2\pi $ corresponds the orthonormal basis 
\begin{equation}
\label{orbassigma} 
 \left\lbrace \frac{\sigma_\phi}{\sqrt{2}}\,, \quad  \frac{\sigma_{\phi+\pi/2}}{\sqrt{2}}\,, \quad \frac{\bu}{\sqrt{2}} \right\rbrace \,. 
\end{equation}
It results from \eqref{fAfsigma} that, for $0<r\leq1$, we have 
\begin{equation}
\label{fnorm2Af}
\Vert \vec{f}\Vert^2_{V_3}= f_1^2+f_2^2 + f_0^2 = \lg A_f|\Delta_rA_f\rg_{\mathcal{J}_3}\,, \quad \Delta_r:=\mathrm{diag}\left( \frac{2}{r^2},  \frac{2}{r^2} ,1\right)\, ,
\end{equation}
where $A_f$ is viewed as the vector
\begin{equation}
	A_f = \begin{pmatrix}
	\frac{f_1 r}{\sqrt{2}}       \\
	\frac{f_2 r}{\sqrt{2}}    \\  
	f_0
\end{pmatrix}
\end{equation}
in the $3$-d real Hilbert space with basis \eqref{orbassigma}, as shown in \eqref{fAfsigma}. From the positiveness of $\Delta_r$, we infer that the map $ \R^3 \ni \vec{f} \mapsto A_f\in \mathcal{J}_3$ is one-to-one but not unitary. For this reason, any symmetric matrix $A$ is, for $r\neq 0$, the $\rho_{r,\phi+\phi_0}$-quantization of a unique function $f_A \in \R^3$ as
\begin{equation}
\label{AfA}
\begin{split}
A\equiv A_{f_A} &=a_1  \frac{1}{\sqrt{2}} \sigma_{2\phi_0}  +  a_2 \frac{1}{\sqrt{2}} \sigma_{2\phi_0+\pi/2} + a_0 \frac{\bu}{\sqrt{2}}=\int_0^{2\pi}\frac{\ud \phi}{\pi}\,f_A(\phi)\,\rho_{r,\phi+\phi_0}\,,\\ 
f_A(\phi)&= 
a_1\frac{\sqrt{2}}{r}\cos 2\phi+a_2\frac{\sqrt{2}}{r}\sin 2\phi + \frac{a_0}{\sqrt{2}}\,.
\end{split}
\end{equation}
Adopting the terminology of Berezin-Lieb mentioned in Section \ref{semclasspor}, $f_A$ is  the upper (Lieb) or contravariant (Berezin) symbol $\check{A}$ of $A$. Similarly, the lower (Lieb) or covariant (Berezin) symbol of a symmetric matrix $A=\begin{pmatrix}
 a     &   b \\
  b    &  d
\end{pmatrix}$ stands for the semi-classical portrait of the latter. It is here defined coherently with \eqref{lowsymbgen}:
\begin{equation}
\label{lwcircle}
 \check{A}(\phi)=  \mathrm{Tr}\left(A\rho_{r,\phi_0}(\phi)\right) = \frac{a+d}{2} +r\left(\frac{a-d}{2}\cos2(\phi+\phi_0) + b\sin2(\phi+\phi_0)\right)\,.
\end{equation}  
 If $A\equiv A_f$ is the $\rho_{r,\phi_0}$-quantized of $f\in V_3$, we obtain 
 \begin{equation}
\label{lwAf}
\check{A}_f(\theta)\equiv \check{f}(\theta)= \lg f\rg + rs\int_0^{2\pi}\frac{\ud \phi}{\pi} \,f(\phi)\, \cos2\left(\phi +\phi_0 - \theta -\theta_0\right)\,.
\end{equation}

Applying this to each component of $f$, we show the map $f\mapsto \check{f}$ in $V_3$ corresponds to the matrix transformation in $\R^3$:
\begin{align}
	\label{ffcheck}
	\overrightarrow{\check{f}}=\begin{pmatrix}
		\check{f}_1      \\
		\check{f}_2 \\
		\check{f}_0 
	\end{pmatrix}= \begin{pmatrix}
		rs\cos2\left(\theta_0-\phi_0\right)    & - rs\sin2\left(\theta_0-\phi_0\right)  &0\\
		rs\sin2\left(\theta_0-\phi_0\right)   & rs\cos2\left(\theta_0-\phi_0\right) & 0\\
		0  & 0& 1
	\end{pmatrix}\begin{pmatrix}
		f_1    \\
		f_2  \\
		f_0
	\end{pmatrix}\,,
\end{align}
describing a rotation of angle $2 \left(\theta_0 - \phi_0\right)$ around the direction $f_0$ together with a possible contraction of the components $f_1$ and $f_2$ by a factor $rs$ (remember $0\leq r,s\leq1$).

The upper or contravariant symbol  $\hat{A}(\phi)$ appearing in the operator-valued integral
\begin{equation}
A = \frac{1}{\pi} \int_0^{2\pi} d\phi \, \hat{A}(\phi) | \phi  \rangle
\langle \phi |\, . \label{ssup13}
\end{equation}
is highly non-unique, but is chosen as the simplest one.

\subsection{Integral POVM quantization on the circle and  Toeplitz-Naimark formalism}
Let us see how the  material introduced in Section \ref{toepnaim} fits the quantization map \eqref{qtfrhor}. In the present case the large Hilbert space is the whole $L_{\R}^2(\mathbb{S}^1,\ud\phi/\pi)$, i.e., the Hilbert space of real-valued functions which are square-integrable on the circle w.r.t. the measure $\ud\phi/\pi$. Now consider the quantization of a function $f(\phi)$ on the circle,  as is described in Eq. \eqref{qtfrhor}. We start from the linear operator in $L_{\R}^2(\mathbb{S}^1,\ud\phi/\pi)$ formally defined by
\begin{equation}
	\label{MfV5}
	L_{\R}^2(\mathbb{S}^1,\ud\phi/\pi)\ni v \mapsto M_f v = fv\,.
\end{equation}
and valid if $fv \in L_{\R}^2(\mathbb{S}^1,\ud\phi/\pi)$.
We now consider the family of  rotation matrices  $ \mathcal{R}\left(\phi\right)$ acting on the plane $\mathcal{H}=\R^2$. 
The matrix elements $$ \left\{\mathcal{R}_{ij}\left(\phi\right)\right\}=  \left\{\mathcal{R}_{11}\left(\phi\right)= \mathcal{R}_{22}\left(\phi\right)=\cos\phi, \mathcal{R}_{12}\left(\phi\right)= -\mathcal{R}_{21}\left(\phi\right)=-\sin\phi\right\} $$ satisfy the conditions of square-integrability and orthonormality \eqref{sqintUjx} and \eqref{sqintUjlx} on the circle. 
\begin{align}
	\label{sqintRjx}
	\delta_{ik}&= \int_0^{2\pi} \mathcal{R}_{ji}(\phi)\,\mathcal{R}_{jk}(\phi)\,\frac{\ud \phi}{\pi}  \quad \forall\, j\,,\\
	\label{sqintRjlx} 0&= \int_0^{2\pi}\left[ \mathcal{R}_{ji}(\phi)\, \mathcal{R}_{lk}(\phi) +  \mathcal{R}_{li}(\phi)\, \mathcal{R}_{jk}(\phi)\right]  \,\frac{\ud \phi}{\pi} \quad \forall\,j,l\,, \ j\neq l\,.
\end{align}
Explicitly,
\begin{align}
	\label{sqintR1x}
	1&= \int_0^{2\pi} \left(\mathcal{R}_{11}(\phi)\right)^2\,\frac{\ud \phi}{\pi} =  \int_0^{2\pi} \left(\mathcal{R}_{12}(\phi)\right)^2\,\frac{\ud \phi}{\pi}\\
	\label{sqintR2x}
	0&= \int_0^{2\pi} \mathcal{R}_{11}(\phi)\,\mathcal{R}_{12}(\phi)\,\frac{\ud \phi}{\pi} \\
	\label{sqintR11x} 0&= \int_0^{2\pi}\left[ \mathcal{R}_{11}(\phi)\, \mathcal{R}_{21}(\phi) +  \mathcal{R}_{21}(\phi)\, \mathcal{R}_{11}(\phi)\right]  \,\frac{\ud \phi}{\pi} ,\\
	\label{sqintR12x} 0&= \int_0^{2\pi}\left[ \mathcal{R}_{11}(\phi)\, \mathcal{R}_{22}(\phi) +  \mathcal{R}_{21}(\phi)\, \mathcal{R}_{12}(\phi)\right]  \,\frac{\ud \phi}{\pi} \,,
\end{align}
and so on. Within the framework established with Proposition \ref{proptoep},  let us identify the two orthonormal systems $\mathcal{O}_j\subset L_{\R}^2(\mathbb{S}^1,\ud\phi/\pi)$, $j=1,2$. They are 
\begin{equation}
	\label{OURij}
	\mathcal{O}_1= \{\cos\phi, \sin\phi\} =  \{v_1(\phi), v_2(\phi)\}\,\quad \mathcal{O}_2= \{-\sin\phi, \cos\phi\} =  \{-v_2(\phi), v_1(\phi)\} \,.
\end{equation}

We denote by $\mathbb{P}_{\mathfrak{o}_j}$ the orthogonal projector mapping  $L_{\R}^2(\mathbb{S}^1,\ud\phi/\pi)$ onto the sub-Hilbert space $\mathcal{H}_{\mathfrak{o}_j}$ which is  the linear span of the elements of $\mathcal{O}_j$. Then we identify $\mathcal{H}_{\mathfrak{o}_j}\equiv \mathcal{H}$ through
\begin{equation}
	U_{\mathfrak{o}_j}|e_i\rg = \mathcal{R}_{ji}\,.
\end{equation}
With this identification  the relations \eqref{PMAPj} hold under the form,
\begin{equation}
	\label{PMAP1phi}
	\mathbb{P}_{\mathfrak{o}_1}M_f\mathbb{P}_{\mathfrak{o}_1}= U_{\mathfrak{o}_1}A^{(1)}_{f}U^{-1}_{\mathfrak{o}_1} \,, \quad A^{(1)}_{f}=  \int_0^{2\pi}  f(\phi) \,|\phi\rg_{e_1}{}_{e_1}\lg \phi|\,\frac{\ud \phi}{\pi}\,.
\end{equation}
\begin{equation}
	\label{PMAP2phi}
	\mathbb{P}_{\mathfrak{o}_2}M_f\mathbb{P}_{\mathfrak{o}_2}= U_{\mathfrak{o}_2}A^{(2)}_{f}U^{-1}_{\mathfrak{o}_2} \,, \quad A^{(2)}_{f}=  \int_0^{2\pi}  f(\phi) \,|\phi\rg_{e_2}{}_{e_2}\lg \phi|\,\frac{\ud \phi}{\pi}\,.
\end{equation}
With the identification $|e_1\rg = |\phi_0\rg$, $|e_2\rg = |\phi_0+ \pi/2\rg$, we finally have
\begin{equation}
	\label{Afrhophi}
	A^{(\rho_{r,\phi_0})}_{f}= \frac{1+r}{2}  A^{(\phi_0)}_{f} +\frac{1-r}{2}  A^{(\phi_0 +\pi/2)}_{f}\,,
\end{equation}
and we retrieve the same form as the density matrix \eqref{specrhor}.

\section{An example of  quantum measurement of orientations: polarization of light and Stokes parameters}
\label{quantmeas}

We now wish to describe the relationship between POVMs and light polarization within the context of integral quantization, which was briefly presented in \cite{Beneduci:2021iyv}. Let us first introduce the notion of polarization, and explain how it can be related to density matrices. As was pointed out in the preface of the textbooks \cite{Collett1993,Goldstein2003}, polarization is a fundamental characteristic of the transverse wave that is light. 
There exists a well-known  physical interpretation of $2\times2$ real or complex  density matrices in terms of the so-called Stokes parameters for the polarization of light, see for instance \cite{McMaster1954} and references therein. The significance and measurements of such parameters is clearly explained in \cite{schaefer2007measuring}. We follow here the presentation given by Landau and Lifshitz in Section 50 of \cite{Landau1975}, where is introduced the polarization ``tensor'' for a quasi-monochromatic plane wave propagating along the $z$-axis with the following notations involving the three Pauli matrices:
\begin{equation}
\label{poltensor}
\left(\rho_{\alpha \beta}\right) = \frac{1}{2}\begin{pmatrix}
1+ \xi_3      & \xi_1 -\ii \xi_2   \\
  \xi_1 + \ii \xi_2     &  1 - \xi_3
\end{pmatrix} = \frac{1}{2}\left(\bu+ \sum_{i=1}^3\xi_i \sigma_i\right)\,,
\end{equation}   
with the Stokes parameters
\begin{equation}
\label{stokland}
\xi_1 = r\sin2\phi\,,\quad \xi_2 = A\, , \quad \xi_3 =r \cos2\phi\,.
\end{equation}
The labels $\alpha$, $\beta$ run over the coordinates $(x,y)$ in the plane orthogonal to the $z$-axis. In \eqref{stokland} the parameter $0\leq r\leq 1$ characterizes the degree of maximal linear polarization whilst $-1\leq A\leq 1$ characterizes the circular polarization of the beam. \begin{figure}[h]
\begin{center}
\setlength{\unitlength}{0.1cm} 
\begin{picture}(60,60)
\put(10,10){\line(1,0){50}}
\put(10,10){\line(-1,0){30}}
\put(10,10){\vector(1,0){15}}
\put(10,10){\vector(0,1){15}}
\put(25, 7){\makebox(0,0){$\widehat{\boldsymbol{k}}$}} 
\put(8, 25){\makebox(0,0){$\widehat{\boldsymbol{\jmath}}$}} 
\put(22, 18){\makebox(0,0){$\widehat{\boldsymbol{\imath}}$}} 
\put(10,10){\line(0,1){40}} 
\put(10,10){\line(0,-1){10}} 
\put(10,10){\line(1,1){30}}
\put(10,10){\line(-1,-1){10}} 
\put(10,10){\vector(1,1){12}} 
\put(10,10){\makebox(0,0){$\bullet$}}
\put(10,10){\vector(-2,1){25}} 
\put(-12,26.5){\makebox(0,0){$\mathfrak{Re}\left(\overrightarrow{\mathcal{E}}\right)$}}
\put(07, 50){\makebox(0,0){$y$}} 
\put(60, 6.5){\makebox(0,0){$z$}} 
\put(40, 37){\makebox(0,0){$x$}} 
\thicklines 
\end{picture}
\end{center}
\caption{A quasi-monochromatic plane wave propagates to the right along the $z$-axis. The electric field, as the real part of its complex description,   lies in the $x-y$ plane.}
\label{proplight}
\end{figure}
Let us explain the physical origin of the polarization tensor  \eqref{poltensor}. In its complex description, the electric field reads
\begin{equation}
\label{celfield}
\overrightarrow{\mathcal{E}}(t)= \overrightarrow{\mathcal{E}_0}(t)\,e^{\ii \omega t}= \mathcal{E}_x\,\bi + \mathcal{E}_y\,\bj = \left(\mathcal{E}_{\alpha }\right)\,, 
\end{equation}
where $\overrightarrow{\mathcal{E}_0}(t)$ has slow temporal variation and $\omega$ is the mean frequency. $\overrightarrow{\mathcal{E}_0}$ determines the polarization of light. The latter is measured through Nicol prisms, or other devices, by measuring the intensity of the light yielded by mean values of quadratic expressions of field components, which are proportional to $\mathcal{E}_{\alpha}\mathcal{E}_{\beta}$, $\mathcal{E}_{\alpha}\mathcal{E}^{\ast}_{\beta}$ and their respective complex conjugates. Now,  $\mathcal{E}_{\alpha}\mathcal{E}_{\beta}=\mathcal{E}_{0\alpha}\mathcal{E}_{0\beta} e^{2\ii \omega t}$ and  $\mathcal{E}^{\ast}_{\alpha}\mathcal{E}^{\ast}_{\beta}=\mathcal{E}^{\ast}_{0\alpha}\mathcal{E}^{\ast}_{0\beta} e^{-2\ii \omega t}$  have rapidly oscillating factors and so have null temporal average $\lg\cdot\rg_t$. It results that the properties of a partially polarized light are completely described by the tensor   
\begin{equation}
\label{Jalbet}
J_{\alpha\beta}:=\left\lg \mathcal{E}_{0\alpha}\mathcal{E}^{\ast}_{0\beta}\right\rg_t\,.
\end{equation}
The quantity $J=\sum_{\alpha}J_{\alpha\alpha}= \left\lg \vert \mathcal{E}_{0x}\vert^2\right\rg_t + \left\lg \vert \mathcal{E}_{0y}\vert^2\right\rg_t$ determines the intensity of the wave, obtained from the measurement of the energy flux transported by the wave. Since this quantity does not concern the properties of polarization of the wave, one has eventually to deal with the normalized tensor \eqref{poltensor}, \textit{i.e.}, 
\begin{equation}
\label{Jrho}
\rho_{\alpha \beta} = \frac{J_{\alpha\beta}}{J}\,.
\end{equation}
The light is said completely polarized when the complex amplitude $\overrightarrow{\mathcal{E}}_0$ is time-independent, and so is equal to its time average. Then, the polarization tensor factorizes as
\begin{equation}
\label{totpol}
\left(\rho_{\alpha \beta}\right) = \frac{1}{J}\begin{pmatrix}
 \vert \mathcal{E}_{0x} \vert^2    & \mathcal{E}_{0x}\mathcal{E}^{\ast}_{0y}   \\
  \mathcal{E}^{\ast}_{0x}\mathcal{E}_{0y}    &   \vert \mathcal{E}_{0y} \vert^2
\end{pmatrix}= \begin{pmatrix}
    \mathcal{E}_{0x} /\sqrt{J}      \\
    \mathcal{E}_{0y}  /\sqrt{J}    
\end{pmatrix}\begin{pmatrix}
  \mathcal{E}^{\ast}_{0x} /\sqrt{J}     & \mathcal{E}^{\ast}_{0y} /\sqrt{J}    \\  
\end{pmatrix} \,,
\end{equation}   
 \textit{i.e.}, is the orthogonal projector along $\overrightarrow{\mathcal{E}}_0$, and, in quantum terms, a pure state.  This case should be put in regard to the other extreme case, namely the non-polarized or natural light, for which all directions in the $x-y$ plane are equivalent:
 \begin{equation}
\label{non-pol}
\left(\rho_{\alpha \beta}\right) = \frac{1}{2}\,\delta_{\alpha\beta}\,.
\end{equation}
In the general case and with $\xi_i$ coordinates,  $\det\left(\rho_{\alpha \beta}\right) = \frac{1}{4}\left(1-\xi_1^2-\xi_2^2 -\xi_3^2\right)\equiv  \frac{1}{4}\left(1-P^2\right)$, where $0\leq P\leq 1$ is called  the degree of polarization, from $P=0$ (random polarization) to $P=1$ (total polarization). Another extreme case holds  with circular polarization. Then $\overrightarrow{\mathcal{E}_0}$ is constant and $\mathcal{E}_{0y}= \pm \ii \mathcal{E}_{0x}$, which gives $\left(\rho_{\alpha \beta}\right)= (1/2)(\bu\pm\ii \sigma_2)$. As a consequence, the parameter $-1\leq A\leq 1$ in \eqref{stokland} is interpreted as the degree of circular polarization, with $A=1$ (resp. $A=-1$) for right (resp. left) circular polarization, and  $A=0$ for linear polarization. 

This distinction between linear and circular polarizations corresponds to the decomposition of \eqref{poltensor} into its symmetric and antisymmetric parts:
\begin{equation}
\label{rolroh}
\left(\rho_{\alpha \beta}\right) = \rho_{r,\phi} + \frac{A}{2}\sigma_2 = \frac{1+r}{2} E_{\phi} +  \frac{1-r}{2} E_{\phi + \pi/2} + \frac{A}{2}\sigma_2 \,,
\end{equation}
where we have introduced our own notations in \eqref{specrho} and  \eqref{standrhomain}. Hence, the light can be viewed as the superposition of two waves which are incoherent and elliptically polarized, with similar and orthogonal polarization ellipses. 
Within the  context of the present paper, circular polarization is not considered. We then put $A=0$, ultimately reducing the polarization tensor to our density matrix.

Let us now describe the interaction polarizer-partially linear polarized light as an example of  simple quantum measurements. Here, we follow and generalize  the example presented by Peres in \cite{Peres1990}. Two planes and their tensor product are under consideration. The first one is the Hilbert space on which act the states $\rho^M_{s,\theta}$ of the polarizer viewed as an  orientation \textit{pointer}. We note that the action of the generator of rotations   $\tau_2=-\ii\sigma_2$ on these states corresponds to a $\pi/2$ rotation :
\begin{equation}
\label{tau2act}
\tau_2\rho^M_{s,\theta}\tau_2^{-1} = -\tau_2\rho^M_{s,\theta}\tau_2 = \rho^M_{s,\theta +\pi/2}\,.
\end{equation}
The second plane is  the Hilbert space on which act the partially linearized polarization states $\rho^L_{r,\phi}$ of the plane wave crossing the polarizer. As was explained above, the spectral decomposition of the latter corresponds to the incoherent superposition of two completely linearly polarized waves 
\begin{equation}
\label{pointer}
\rho^L_{r,\phi}=\frac{1+r}{2} \, E_{\phi} + \frac{1-r}{2} \, E_{\phi + \pi/2}\, . 
\end{equation}
The pointer is designed to detect the  orientation in the plane determined by the angle $\phi$.
The interaction pointer-system generating a measurement whose time duration is the interval $I_M=(t_M-\eta, t_M +\eta)$ centred at $t_M$ is described by the  (pseudo-) Hamiltonian operator 
\begin{equation}
\label{intham}
\widetilde{H}_{\mathrm{int}}(t)= \, g^{\eta}_M(t)\tau_2 \otimes \rho^L_{r,\phi} \,, 
\end{equation}
where $g^{\eta}_M$ is a Dirac sequence with support in $I_M$, \textit{i.e.},  $$\lim_{\eta\to 0}\int_{-\infty}^{+\infty}\ud t \, f(t) \,g^{\eta}_M(t)= f(t_M)\,.$$ 
The operator \eqref{intham} is the tensor product of an antisymmetric operator (\textit{i.e.}, or pseudo-Hamiltonian, see \cite{bergeron2019orientations}) for the pointer with an operator for the system which is symmetric (\textit{i.e.} Hamiltonian). Nevertheless,  the operator $U(t,t_0)$ defined  for $t_0< t_M-\eta$ as  
\begin{equation}
\label{nevop}
U(t,t_0)= \exp\left[\int_{t_0}^{t}\ud t^{\prime}\, g^{\eta}_M(t^{\prime})\, \tau_2 \otimes \rho^L_{r,\phi}\right] = \exp\left[ G_M^\eta(t)\, \tau_2 \otimes \rho^L_{r,\phi}\right]\, , 
\end{equation} 
with $G_M^\eta(t)=\int_{t_0}^{t}\ud t^{\prime} \, \,g^{\eta}_M(t^{\prime})$, is an evolution operator. Hence, as soon as $t>t_M+\eta$, $G_M^\eta(t)=1$, and from the general formula involving an orthogonal projector $P$,  
\begin{equation}
\label{expP}
\exp(\theta \tau_2 \otimes P) = \mathcal{R}(\theta) \otimes  P + \bu\otimes (\bu-P)\, , 
\end{equation}
we obtain
\begin{equation}
\label{nevop1t}
 U(t,t_0)= \mathcal{R}\left(G_M^\eta(t)\,\frac{1+r}{2}\right) \otimes  E_{\phi} + \mathcal{R}\left(G_M^\eta(t)\,\frac{1-r}{2}\right)  \otimes  E_{\phi +\pi/2}\, .
\end{equation}
For $t_0<  t_M-\eta$ and $t> t_M+\eta$, we finally obtain
\begin{equation}
\label{nevop1}
 U(t,t_0)= \mathcal{R}\left(\frac{1+r}{2}\right) \otimes  E_{\phi} + \mathcal{R}\left(\frac{1-r}{2}\right)  \otimes  E_{\phi +\pi/2}\, . 
\end{equation}
One easily checks that  $U(t,t_0) U(t,t_0)^{\dag} = U(t,t_0)^{\dag} U(t,t_0)  = \bu\otimes \bu$. 
After having prepared the polarizer in the state $\rho^M_{s_0,\theta_0}$,  the  action of \eqref{nevop1} on the initial state $\rho^M_{s_0,\theta_0} \otimes \rho^L_{r_0,\phi_0}$ reads for $t> t_M+\eta$
\begin{align}
\label{actinUtt0}
\nonumber U(t,t_0)\,\rho^M_{s_0,\theta_0} \otimes \rho^L_{r_0,\phi_0}\,U(t,t_0)^{\dag}=& \rho^M_{s_0,\theta_0+\frac{1+r}{2}} \otimes \frac{1+r_0\cos2(\phi-\phi_0)}{2}\,E_{\phi}\\
\nonumber  & + \rho^M_{s_0,\theta_0+\frac{1-r}{2}} \otimes \frac{1-r_0\cos2(\phi-\phi_0)}{2}\,E_{\phi+\pi/2}\\
\nonumber  &+ \frac{1}{4}\left(\mathcal{R}(r) +s_0\sigma_{2\theta_0 +1}\right)\otimes r_0\sin2(\phi-\phi_0)\,E_{\phi}\tau_2\\
 &- \frac{1}{4}\left(\mathcal{R}(-r) +s_0\sigma_{2\theta_0 +1}\right)\otimes r_0\sin2(\phi-\phi_0)\,\tau_2 E_{\phi}\, . 
\end{align}
As expected from the standard  theory of quantum measurement,  this formula indicates that the probability for the pointer to rotate by $\frac{1+r}{2}$, corresponding to the polarization  along the orientation  $\phi$, is 
\begin{equation}
\label{ pointerphi}
\mathrm{Tr}\left[\left(U(t,t_0)\,\rho^M_{s_0,\theta_0} \otimes \rho^L_{r_0,\phi_0}\,U(t,t_0)^{\dag}\right)\left(\bu\otimes E_\phi\right)\right]= \frac{1+r_0\cos2(\phi-\phi_0)}{2}\,,
\end{equation} 
  whereas it is 
\begin{equation}
\label{ pointerphiorth}
\mathrm{Tr}\left[\left(U(t,t_0)\,\rho^M_{s_0,\theta_0} \otimes \rho^L_{r_0,\phi_0}\,U(t,t_0)^{\dag}\right) \left(\bu\otimes E_{\phi+\pi/2}\right)\right]= \frac{1-r_0\cos2(\phi-\phi_0)}{2}\,,
\end{equation}  
for the perpendicular orientation $\phi + \pi/2$ and the pointer rotation by $\frac{1-r}{2}$. For the completely linear polarization of the light, i.e. $r_0=1$, we recover the familiar Malus laws, $\cos^2 (\phi-\phi_0)$ and $\sin^2 (\phi-\phi_0)$ respectively.

\section{Compatibility of POVMs}\label{Compatibility}

We now argue that the physical situation where two measurements are made one after another is described by a dichotomic POVM. We then recall the necessary conditions for the compatibility of two dichotomic POVMs, and what form these conditions take in the Euclidean plane.

\subsection{POVMs arising from the measurement of polarization}\label{sequential}

\noindent
It is well known \cite{davies1976quantum} that the statistics of the sequential measurement of two incompatible observables is described by a POVM. Let us analyze the case of the sequential measurement of two incompatible polarization observables.
Suppose an incident light ray is polarized according to the state $\rho$ and is sent toward a first vertical polarizer ($\pi/2$) and subsequently toward a second oblique ($\pi/4$) polarizer. The light ray is then detected by a light detector. Suppose the state of the detector is $D=1$ if it detects a light ray and $D=0$ if it does not. What is the probability that the light ray triggers the detector (in which case $D=1$ and the polarization of the transmitted light ray is $\pi/4$)?

We first note that the probability that the light ray exits the first polarizer is $Tr \left(\rho E_{\frac{\pi}{2}} \right)$ where $\rho$ is an initial state and $E_{\frac{\pi}{2}}=|\frac{\pi}{2}\rangle\langle\frac{\pi}{2}|$. If the light ray has passed the first polarizer, its state is
$$\rho\rq{}=\frac{E_{\frac{\pi}{2}} \rho E_{\frac{\pi}{2}}}{Tr \left(E_{\frac{\pi}{2}}\rho \right)} \;.$$
Then, the probability that it passes the second polarizer is $Tr \left(\rho\rq{} E_{\frac{\pi}{4}}\right)$. Therefore, the probability that the state of the detector is $D=1$ at the end of the measurement process (the light ray is transmitted) is 
\begin{align}
p\left(D=1\right)=Tr\left(\rho E_{\frac{\pi}{2}}\right)Tr\left(\rho\rq{} E_{\frac{\pi}{4}}\right)=Tr\left(\rho E_{\frac{\pi}{2}} E_{\frac{\pi}{4}} E_{\frac{\pi}{2}}\right)=\frac{1}{2}Tr\left(\rho E_{\frac{\pi}{2}}\right).
\end{align}
Note that $E_{\frac{\pi}{2}} E_{\frac{\pi}{4}} E_{\frac{\pi}{2}}$ is an effect, \textit{i.e.} a symmetric operator $E$ such that $0\leq E\leq \bu$, but not a projection operator (it is a multiple of a projection operator).
Now, the probability that the light ray passes the first polarizer but does not pass the second is 
\begin{align}
Tr\left(\rho E_{\frac{\pi}{2}}\right)Tr\left[\rho\rq{} \left(\bu-E_{\frac{\pi}{4}}\right)\right]=Tr\left[\rho E_{\frac{\pi}{2}} \left(\bu-E_{\frac{\pi}{4}}\right) E_{\frac{\pi}{2}}\right]=\frac{1}{2}Tr\left(\rho E_{\frac{\pi}{2}}\right). 
\end{align}
The probability that, at the end of the measurement process, the detector is in the state $D=0$ is
\begin{align}
p(D=0)=Tr(\rho E_0)+\frac{1}{2}Tr(\rho E_{\frac{\pi}{2}})=Tr\left[\rho \left(E_0+\frac{1}{2}E_{\frac{\pi}{2}}\right)\right].
\end{align}
Note that $p(D=1)+p\big(D=0\big)=1$. Again, $E_0+\frac{1}{2}E_{\frac{\pi}{2}}$ is an effect but not a projection operator. The measurement process is then described by the dichotomic POVM 
\begin{align}
F=\left\lbrace\frac{1}{2}E_{\frac{\pi}{2}} \;, \; E_0+\frac{1}{2}E_{\frac{\pi}{2}}\right\rbrace \;,
\end{align}
which is not the spectral measure corresponding to a self-adjoint operator. 

If we analyze the sequential measurement where the light ray first passes the oblique polarizer and then the vertical one, we obtain that the overall measurement process is characterized by the POVM $\left\lbrace\frac{1}{2}E_{\frac{\pi}{4}}, \bu-\frac{1}{2}E_{\frac{\pi}{4}}\right\rbrace$ which is different from the previous one describing the measurement process with the light ray passing first through the vertical polarizer. The two measurement procedures give different results. That is explained by the incompatibility of the measurement of the vertical and the oblique polarization and is analogous to the Heisenberg microscope experiment where a sequential measure of the position and momentum is realized.

\subsection{Dichotomic POVMs}

A dichotomic POVM $F$ is a pair $F=\{A, \bu-A\}$ with $A$  an effect.
We consider two dichotomic POVMs $F$ and $F'$, and look for necessary and sufficient conditions for their compatibility. We show that every dichotomic observable $F$ is the fuzzy version of a sharp observable $E$; $F$ can also be interpreted as an approximation of $E$. Then, inspired by \cite{busch2016quantum}, we answer the following question: given two incompatible sharp observables $E$ and $E'$, is it possible to define two compatible fuzzy versions $F$ and $F'$ of $E$ and $E'$ respectively whose joint measurement can be interpreted as the approximate joint measurement of $E$ and $E'$? First we recall the main definitions and theorems concerning the compatibility of POVMs that we adapt to the case of a real, bidimensional Hilbert space.
Two dichotomic POVMs $F_1=\{A_1,\bu-A_1\}$ and $F_2=\{A_2,\bu-A_2\}$ are compatible if there is a third POVM $F=\{G_{11},G_{00},G_{10},G_{01}\}$ of which $F_1$ and $F_2$ are the marginals, \textit{i.e.}, 
\begin{align}\label{compatible}
&A_1=G_{11}+G_{10},\,\, A_2=G_{11}+G_{01},\\
&\bu-A_1=G_{00}+G_{01},\,\, I-A_2=G_{00}+G_{10}\notag \;.
\end{align}
A necessary and sufficient condition for the compatibility \cite{busch2007approximate} of $F_1$ and $F_2$ is the existence of a positive operator $G_{11}$ such that 
\begin{equation}\label{CNS}
\begin{split}
&G_{11} \leq A_1, A_2 \;, \\
	&A_1+A_2-G_{11} \leq \bu \;.
\end{split}
\end{equation}

Indeed, if $G_{11}$ exists, we can define the effects 
\begin{align}
	G_{00}&=\bu-A_1-A_2+G_{11}\geq 0 \;, \\
	G_{10}&=A_1-G_{11}\geq 0 \;, \\
	G_{01}&=A_2-G_{11}\geq 0  \;,
\end{align}
which satisfy \eqref{compatible}.

Now we translate the above compatibility conditions in the present context where an observable is represented by the dichotomic observable corresponding to a symmetric operator of the kind defined below 
\begin{equation}\label{random}
A(\alpha,\phi,r)=
 \begin{pmatrix}
  \frac{\alpha}{2}  + \frac{r}{2}\cos2\phi  &   \frac{r}{2}\sin2\phi  \\
\frac{r}{2}\sin2\phi    &   \frac{\alpha}{2}  - \frac{r}{2}\cos2\phi
\end{pmatrix}= \alpha \rho_{\frac{r}{\alpha},\phi}\, . 
\end{equation}

\noindent
with eigenvalues $\frac{\alpha}{2}\pm \frac{r}{2}$. The condition $0\leq A\leq I$ then becomes $r\leq\alpha\leq 2-r$ (so that $0\leq\alpha\leq2$ and $0\leq r\leq1$) which ensures that $A\left(\alpha,\phi,r\right)$ is an effect. Note that there are two meanings we can give to matrices of the form \eqref{random}: if $\alpha=r=1$, they are density operators which can be used to represent pure states or sharp observables;  in the case $\alpha=1$, $r\neq1$, they can represent mixed states or fuzzy (unsharp) observables; in the general case $\alpha,r\neq1$ they represent fuzzy unbiased observables (see below). We recall that in the context of the linear polarisation of the light as described by \eqref{Jrho}, the parameter $\alpha$ may be interpreted as the intensity of the light while $r=\xi_1^2+\xi_3^2=P$ provides a measure of the degree of polarization. That is relevant in order to give an interpretation of the fuzzy observables as  approximations of the sharp observables and of the joint measurability of fuzzy observables as the approximate joint measurability of incompatible sharp polarization observables (see below).

The dichotomic observable corresponding to $A(\alpha,\phi,r)$ is $\big\{F_+=A(\alpha,\phi,r),\,F_-=\bu-A(\alpha,\phi,r)\big\}$ and can be interpreted as the randomization of the dichotomic PVM $\big\{E_+=A(1,\phi,1),\,E_-=\bu-A(1,\phi,1)\big\}$ through the Markov kernel $\mu(+,+)=\frac{\alpha+r}{2}$, $\mu(+,-)=\frac{\alpha-r}{2}$, $\mu(-,+)=1-\mu(+,+)$, $\mu(-,-)=1-\mu(+,-)$. Indeed, $F_+=\mu(+,+)E_++\mu(+,-)E_-$ and $F_-=\mu(-,+)E_++\mu(-,-)E_-$.

As an example, we can consider the POVMs $\left\lbrace\frac{1}{2}E_{\frac{\pi}{2}},\bu-\frac{1}{2}E_{\frac{\pi}{2}}\right\rbrace$ and $\left\lbrace\frac{1}{2}E_{\frac{\pi}{4}},\bu-\frac{1}{2}E_{\frac{\pi}{4}}\right\rbrace$ that we found in section \ref{sequential} which represent sequential measurements. In particular, we have that $\frac{1}{2}E_{\frac{\pi}{2}}=A\left(\frac{1}{2},\frac{\pi}{2},\frac{1}{2}\right)$, $\frac{1}{2}E_{\frac{\pi}{4}}=A\left(\frac{1}{2},\frac{\pi}{4},\frac{1}{2}\right)$. They are not compatible as can be seen by showing that there is no symmetric positive operator satisfying conditions (\ref{CNS}).  Note also that  $\left\lbrace \frac{1}{2}E_{\frac{\pi}{2}},\bu-\frac{1}{2}E_{\frac{\pi}{2}} \right\rbrace$ is the randomization of the PVM  $\left\lbrace E_+=A\left(1,\frac{\pi}{2},1\right),\,E_-=\bu-A\left(1,\frac{\pi}{2},1\right) \right\rbrace$ through the Markov kernels $\mu(+,+)=\frac{1}{2}$, $\mu(+,-)=0$, $\mu(-,+)=\frac{1}{2}$, $\mu(-,-)=1$. The same holds for the POVM $\left\lbrace\frac{1}{2}E_{\frac{\pi}{4}},\bu-\frac{1}{2}E_{\frac{\pi}{4}}\right\rbrace$ which is the randomization of the PVM $\left\lbrace E_+=A\left(1,\frac{\pi}{4},1\right),\,E_-=\bu-A\left(1,\frac{\pi}{4},1\right)\right\rbrace$ through the same Markov kernel.

Let  $$F_1=\{A(\alpha_1,\phi_1,r_1),\bu-A(\alpha_1,\phi_1,r_1)\}$$ and $$F_2=\{A(\alpha_2,\phi_2,r_2), \bu-A(\alpha_2,\phi_2,r_2)\}$$ be two dichotomic POVMs. We look for conditions for the compatibility of $F_1$ and $F_2$. An analogous problem has been dealt with in Ref. \cite{busch2007approximate,busch2016quantum}. The present case can be considered as the real version of the qubit case analyzed in \cite{busch2016quantum}, chapter 14. If circular polarization were included, the relevant Hilbert space would be $\mathbb{C}^2$ and the formalism would coincide with the one in \cite{busch2016quantum}. 

First, we note that to each symmetric operator $A(\alpha,\phi,r)$ there corresponds a vector $v=(r,\phi)$ in the unit closed disc. We then have
\begin{align*}
	A_1&=A(\alpha_1,\phi_1,r_1)\mapsto v_1=(r_1,\phi_1) \;, \\
	A_2&=A(\alpha_2,\phi_2,r_2)\mapsto v_2=(r_2,\phi_2) \;, \\
	G_{11}&=A(\alpha,\phi,r)\mapsto v=(r,\phi) \;.
\end{align*}
Moreover, using \eqref{compatible}
\begin{align*}
	A_1-G_{11}&=A(\alpha_1,\phi_1,r_1)-A(\alpha,\phi,r)=G_{10}=A(\alpha_1-\alpha,r_{10},\phi_{10}) \;, \\
	A_2-G_{11}&=A(\alpha_2,\phi_2,r_2)-A(\alpha,\phi,r)=G_{01}=A(\alpha_2-\alpha,r_{01},\phi_{01}) \;, \\
	\bu-A_1-A_2+G_{11}&=\bu-A(\alpha_1,\phi_1,r_1)-A(\alpha_2,\phi_2,r_2)+A(\alpha,\phi,r)\\ &=G_{00}=A(1-\alpha_1-\alpha_2+\alpha, r_{00},\phi_{00}) \;,
\end{align*}
and consequently, we can define the vectors
\begin{align*}
	v_{10}&=(r_{10},\phi_{10})=v_1-v \;, \\ v_{01}&=(r_{01},\phi_{01})=v_2-v \;,\\ v_{00}&=(r_{00},\phi_{00})=v-v_1-v_2 \;.
\end{align*}
This can be seen by a simple application of Carnot\rq{}s theorem for triangles. Therefore, the respective eigenvalues imply that the compatibility between POVMs is achieved if
\begin{align}\label{conditions}
	\|v\|&=r\leq\alpha  &&(\text {positivity of}\,\, G_{11}) \;, \\
	\|v_1-v\|&=r_{10}\leq\alpha_1-\alpha  &&(\text{positivity of}\,\, G_{10}) \;, \notag\\
	\|v_2-v\|&=r_{01}\leq\alpha_2-\alpha &&(\text{positivity of}\,\, G_{01}) \;, \notag\\
	\|v-v_1-v_2\|&=r_{00}\leq 2-\alpha_1-\alpha_2+\alpha &&(A_1+A_2-G_{11}\leq \bu)\;.\notag 
\end{align}
We then have
\begin{align}\label{necessary}
\|v_1+v_2\|+\|v_1-v_2\|\leq\alpha+(2-\alpha_1-\alpha_2+\alpha)+(\alpha_1-\alpha)+(\alpha_2-\alpha)= 2
\end{align}
which is a necessary condition for the compatibility. That is exactly the same condition obtained in \cite{busch2007approximate} for the case of a two-dimensional complex Hilbert space of which the real case we are considering is a particular case.  However, the condition is not sufficient, as can be seen by noting that the two dichotomic observables introduced in the previous subsection, \textit{i.e.}, $A\left(\frac{1}{2},\frac{\pi}{2},\frac{1}{2}\right)$ and $A\left(\frac{1}{2},\frac{\pi}{4},\frac{1}{2}\right)$, satisfy the necessary condition but are not compatible as we remarked above. Note also that in the case of unbiased dichotomic observables $A(1,\phi_1,r_1)$, $A(1,\phi_2,r_2)$, the condition is also sufficient. Indeed, one can choose $v=\frac{v_1+v_2}{2}$ and $\alpha=1-\frac{|v_1-v_2|}{2}$ which, thanks to condition (\ref{necessary}), ensures that the inequalities (\ref{conditions}) are satisfied. 

In order to give an interpretation of both the fuzzy observable $\{A(\alpha,\phi,r), \bu-A(\alpha,\phi,r)\}$ and the necessary condition (\ref{necessary}), we can resort again to the Stokes parameters. In section \ref{quantmeas}, we have shown that the quantity $P=\xi_1^2+\xi_3^2=r$ provides a measure of the degree of linear polarization; with $P$ varying between $1$ (total polarization) and $0$ (random polarization). In the present framework, the same quantity is interpreted as a measure of the fuzziness of the polarization measurement. For example, a measurement of polarization represented by the PVM  $E=\big\{E_+=A(1,\phi,1),\,E_-=\bu-A(1,\phi,1)\big\}$ (sharp observable) corresponds to a sharp measurement of polarization in the direction defined by $\phi$ while a measurement represented by the POVM $F=\big\{F_+=A(1,\phi,r),\,F_-=\bu-A(1,\phi,r)\big\}$, $r<1$ (unsharp observable) corresponds to a fuzzy measurement; the fuzziness depending on $r$ which measures the degree with which $F$ approximates $E$. The POVM $\hat{F}=\big\{F_+=A(\alpha,\phi,r),\,F_-=\bu-A(\alpha,\phi,r)\big\}$ is a fuzzy version of $E$ as well and the degree of approximation depends both on $\alpha$ and $r$. If two sharp observables $E=\big\{E_+=A(1,\phi,1),\,E_-=\bu-A(1,\phi,1)\big\}$ and $E'=\big\{E_+=A(1,\phi',1),\,E_-=\bu-A(1,\phi',1)\big\}$ are not compatible, they could be approximated by two compatible fuzzy versions $F=\big\{F_+=A(1,\phi,r),\,F_-=\bu-A(1,\phi,r)\big\}$ and $F'=\big\{F_+=A(1,\phi',r'),\,F_-=\bu-A(1,\phi',r')\big\}$. Equation (\ref{necessary}) gives a necessary and sufficient condition for the joint measurability of $F$ and $F'$ to be connected to the degree of fuzziness one has to add in the measurement process; the latter being measured by $r$ and $r'$. In the case we use  fuzzy observables  $F=\big\{F_+=A(\alpha,\phi,r),\,F_-=\bu-A(\alpha,\phi,r)\big\}$ and $F'=\big\{F_+=A(\alpha',\phi',r'),\,F_-=\bu-A(\alpha',\phi',r')\big\}$, with $\alpha\neq 1$ and/or $\alpha'\neq 1$, to approximate $E$ and $E'$, condition (\ref{necessary}) is only necessary.

\section{Conclusions}

We considered in this work the role of real Positive Operator-Valued Measures as quantizers of a physical system as well as their status as quantum observables in finite dimension. We focused in particular in dimension 2, which is the lowest one for which we can develop a non-trivial quantum formalism. The starting point, after having recalled the definition of POVMs, was to lay down some necessary and sufficient conditions for two POVMs to be compatible. We then emphasized that POVMs, besides their role as observables, can also provide a quantization scheme in the framework of integral quantization. In particular, a first result emerged when considering Toeplitz quantization as an integral quantization, where we showed that Naimark theorem is valid in our case.

We continued the study by restraining ourselves to integral quantizations of functions on the unit circle and their actions on real two-dimensional systems, and where we supposed non-negative POVMs, \textit{i.e.}  density matrices. In this case, the quantization map gives a non-commutative version of $\mathbb{R}^3$ that can be identified as a Fourier subspace, and the density matrix \eqref{standrhomain} is particularly simple. Quantizing a function on this subspace gives the simple $2$x$2$ matrix representation \eqref{qtfrhor}, which allows for a simple non-commutative version of $\mathbb{R}^3$. The most general functions living in this Fourier space were found in \eqref{frho2}. They were subsequently used to prove that a mixed state can be represented as a continuous superposition of mixed state, which is a typical quantum mechanical behavior. Applying Naimark theorem made possible the identification between the quantum operator of a POVM and its density matrix.

A physical illustration of the previous results was made by using light polarization. Remembering the construction of the polarization tensor \eqref{rolroh}, we showed that it reduces to the general form of our two-dimensional density matrices when circular polarization is neglected. Then, the interaction between a polarizer and the system was described during a quantum measurement, where we showed that the probability to obtain a given polarization is given by the evolution equation \eqref{actinUtt0}.

The case of sequential measurements was analysed at last. It was shown that two measurements in which a light ray goes first through an oblique polarizer before passing through a vertical polarizer is described by a dichotomic POVM, while a measurement in the reverse order is described by another dichotomic POVM, showing the incompatibility of the measurement procedures. We finally searched for the necessary condition \eqref{necessary} for the compatibility of two dichotomic POVMs in a real bidimensional Hilbert space. In the last section, we related the density matrix's parameters to the Stokes parameters defining a polarization tensor for linearly polarized light. It turns out that we can identify the degree of mixing of a density matrix to the fuzziness of a quantum observable. We concluded that compatibility conditions of two POVMs can be expressed in terms of Stokes parameters.

\subsection*{Acknowledgments}

JPG thanks the Brazilian Center for Research in Physics (CBPF) for its hospitality. EF thanks both the CBPF and the Helsinki Institute of Physics (HIP) for their hospitality. The present work has been realized in the framework of the activities of the INDAM (Istituto Nazionale di Alta Matematica).

\appendix

\section{Quantum orientations in $\R^n$}
 \label{QSON}
 \subsection{Pure and mixed states}
 We now generalize to $\R^n$ a part of the material developed at length in the previous sections. A vector in $\R^n$ is denoted by $\vv$ (Euclidean geometry) or $|\vv\rg$ (Dirac or Hilbertian notation). Thus a pure state is represented by an element of $\mathbb{S}^{n-1}/\mathbb{Z}_2$, and a mixed state by the general density matrix
\begin{equation}
\label{nrho}
\rho\equiv\rho_{\vet,\vph }=\frac{1}{n} \bu + \mathcal{R}(\vph )\,\sfD(\vet) \,
{}^{t}\mathcal{R}(\vph ) \, ,  
\end{equation} 
 with notations similar to \eqref{standrhomain}. The diagonal matrix $\bu/n \equiv \rho_{\mathrm{rm}}$ which appears in this expression  describes the full random mixing. 
The symbol $\sfD$ stands for the diagonal matrix
\begin{equation}
\label{sfDeta}
\sfD(\vet) =\begin{pmatrix} 
    \eta_1 &  \dots &0 \\
    \vdots & \ddots  &\vdots\\
    0 &   \dots   &  \eta_n 
    \end{pmatrix}\,,
\end{equation}
where the vector $\vet= {}^t(\eta_1,\eta_2, \cdots,\eta_n)$ lies in the simplex defined by the conditions
\begin{equation}
\label{simpleta}
\sum_{i=1}^n \eta_i= 0\, , \quad -\frac{1}{n}\leq \eta_i \leq 1 -\frac{1}{n}\, ,
\end{equation} 
\textit{i.e.}, the intersection of the hyperplane normal to the first diagonal and containing the origin with the unit hypercube shifted by the vector ${}^t(-1/n, -1/n, \dotsc, -1/n)$. The rotation matrix  $\mathcal{R}(\vph )\in \mathrm{SO}(n)$ and the symbol $\vph$ stands for the set of $n(n-1)/2$ (angular) parameters of the group SO$(n)$. 
Thus, the manifold of density matrices \eqref{nrho} is real with dimension equal to $n(n+2)/2$.

 The representation of the SO$(n)$ group composition law  is written here as
\begin{equation}
\label{repson}
\mathcal{R}(\vph)\mathcal{R}(\vph^{\prime})= \mathcal{R}(\vph\circ \vph^{\prime}) \;.
\end{equation}
We adopt the Euler parametrisation based on the descending sequence  $\mathrm{SO}(n)\supset\mathrm{SO}(n-1)\supset\cdots\supset\mathrm{SO}(2)$  \cite{vilenkin1978special}. First, for $1\leq k\leq n-1$, we denote $\mathcal{R}_{k}(\phi)$ the rotation by the angle $\phi$ in the $\left(x_k,x_{k+1}\right)$-plane. Next, we introduce the successive products of such elementary rotations:   
\begin{equation}
\label{krot}
\mathcal{R}^{(k)}\left(\vph^{(k)}\right)= \mathcal{R}_{1}(\phi_1^k)\,\mathcal{R}_{2}(\phi_2^k)\,\cdots \mathcal{R}_{k}(\phi_k^k)\, , \quad \vph^{(k)}= \left(\phi_1^k,\phi_2^k,\dotsc,\phi_k^k\right)\,. 
\end{equation}
For each $1\leq k \leq n-1$, the Euler angles $\phi_j^k$, $1\leq j\leq k$ vary as follows
\begin{equation}
\label{eulerdom}
\left\lbrace\begin{array}{cc}
    0\leq  \phi_1^k  < 2\pi&    \\
     0\leq  \phi_j^k  < \pi &   j\neq 1 
\end{array}\right. \, .
\end{equation}
Then any rotation  $\mathcal{R}(\vph)$ of the group SO$(n)$ can be written \cite{vilenkin1978special} as the product
\begin{equation}
\label{paramson}
\mathcal{R}(\vph)= \prod_{k=1}^{n-1}\mathcal{R}^{(k)}\left(\vph^{(k)}\right)\, , \quad \vph= \left( \vph^{(1)},\dotsc, \vph^{(n-1)}\right)\,. 
\end{equation}
The non-normalized invariant (Haar) measure on SO$(n)$ reads in terms of the  above $n(n-1)/2$ Euler angular variables:
\begin{equation}
\label{haarson}
\ud \vph= \prod_{k=1}^{n-1}\prod_{j=1}^{k}\,\sin^{j-1}\phi_j^k\,\ud\phi_j^k\,, 
\end{equation} 
from which we obtain the volume of SO$(n)$
\begin{equation}
\label{volson}
\mathrm{Vol} \left(\mathrm{SO}(n)\right)= \int_{\mathrm{SO}(n)}\ud \vph=  \prod_{i=1}^{n}\mathrm{Area}\left(\mathbb{S}^{i-1}\right)\, , \quad \mathrm{Area}\left(\mathbb{S}^{i-1}\right) =  2\frac{\pi^{i/2}}{\Gamma(i/2)}\,. 
\end{equation}

Finally, note that the spectral decomposition of $\rho_{\vet,\vph }$ is easily deduced from \eqref{nrho} by using the orthonormal set of column vectors of the rotation matrix $\mathcal{R}(\vph )$:
\begin{equation}
\label{specdecnrho}
\rho_{\vet,\vph } = \sum_{i=1}^n \left(\frac{1}{n} + \eta_i\right)\,|\vv_i\rg\lg \vv_i|\, , \quad \mathcal{R}(\vph )= (\vv_1, \dotsc, \vv_n)\, .
\end{equation}

\subsection{SO$(n)$ covariant integral quantisation}
 
Let us choose a set of $(n+2)(n-1)/2$ parameters $(\vet,\vph _0)$.   The covariant  integral quantisation of functions (or distributions) on the SO$(n)$ manifold is based on the resolution of the identity provided by the rotational transport of the matrix density $\rho_{\vet,\vph _0}$, similarly as what we have seen in the two-dimensional case \eqref{margomegamain}. The resolution of identity reads
\begin{equation}
\label{nresunit}
\bu= \int_{\mathrm{SO}(n)}\frac{\ud \vph }{c_n}\, \mathcal{R}(\vph )\, \rho_{\vet,\vph _0}\, {}^t\mathcal{R}(\vph )=\int_{\mathrm{SO}(n)}\frac{\ud \vph }{c_n}\,  \rho_{\vet,\vph \circ\vph _0}\,, 
\end{equation}
where $\ud \vph $ is the Haar measure on SO$(n)$ and $\vph \circ\vph _0$ stands for the group composition law in the parameter space. Equality \eqref{nresunit} is validated by Schur's Lemma. Indeed, due to the invariance of the Haar measure,  the right-hand side commutes with the (fundamental)  $n$-dimensional unitary irreducible representation $\vbe\mapsto \mathcal{R}(\vbe)$ of SO$(n)$. The constant $c_n$ is calculated by taking the trace on both sides,
\begin{equation}
\label{nconstc}
c_n= \frac{\mathrm{Vol} \left(\mathrm{SO}(n)\right)}{n}\,.
\end{equation}
Note that the resolution of the identity could be also proved by hand using the orthonormality between matrix elements of the fundamental (symmetric) UIR of SO$(n)$, $\vph\mapsto \mathcal{R}(\vph )$,
\begin{equation}
\label{northmatel}
\int_{\mathrm{SO}(n)}\frac{\ud \vph }{c_n}\, \mathcal{R}_{ij}(\vph ) \,\mathcal{R}_{i^{\prime}j^{\prime}}(\vph )= \delta_{ii^{\prime}}\,\delta_{jj^{\prime}}\,,
\end{equation}
 which, with $\sum_{i=1}^n \eta_i=0$, entails
 \begin{equation}
\label{northmatel1}
\int_{\mathrm{SO}(n)}\frac{\ud \vph }{c_n}\, \mathcal{R}(\vph ) \,\sfD(\vet)\,{}^t\mathcal{R}(\vph )= 0_{n\times n}\,.
\end{equation}

From \eqref{nresunit}, we can derive the quantisation of a real-valued function (or distribution) $f(\vph )$ on SO$(n)$. It  is precisely given by the map
\begin{equation}
\label{nquantmap}\begin{split}
f(\vph ) \mapsto A_f &=\int_{\mathrm{SO}(n)}\frac{\ud \vph }{c_n}\, f(\vph )\, \rho_{\vet,\vph \circ\vph _0}=\int_{\mathrm{SO}(n)}\frac{\ud \vph }{c_n}\, f\left(\vph \circ(\vph _0)^{-1}\right)\, \rho_{\vet,\vph }\\
&= \lg f \rg + \int_{\mathrm{SO}(n)}\frac{\ud \vph }{c_n}\, f\left(\vph \circ(\vph _0)^{-1}\right)\,\mathcal{R}(\vph ) \,\sfD(\vet)\,{}^t\mathcal{R}(\vph )\\
&\equiv \lg f \rg + \sfN_f\,. 
\end{split}
\end{equation}
In the above equation, $\lg f \rg$ denotes the mean value of $f$ on $\mathrm{SO}(n)$, 
\begin{equation}
\label{nmeanval}
\lg f \rg= \int_{\mathrm{SO}(n)}\frac{\ud \vph }{\mathrm{Vol} \left(\mathrm{SO}(n)\right)}\, f(\vph)\,.
\end{equation} 

We find the elements of the matrix $\sfN_f$ using the decomposition of the tensor product 
\begin{equation}
\label{ntensprod}
\mathcal{R} \otimes \mathcal{R}= \oplus_{\vt} \mathrm{T}^{\vt} \;,
\end{equation}
as a (finite) direct sum of UIR's $\mathrm{T}^{\vt}$ of SO$(n)$ ($\vt$ is a multi-index). Actually, the range of this multi-index is quite narrow in our case.  From the Peter-Weyl theorem for continuous functions on a compact group \cite{peter1927vollstandigkeit}, this expansion reads on the level of matrix elements of $\mathcal{R}$:
 \begin{equation}
\label{nclebsch}
\mathcal{R}_{il}(\vph)\,\mathcal{R}_{jl} (\vph)= \sum_{\vt,\vI,\vJ} C^{\vt}_{il,jl,\vI,\vJ}\mathrm{T}^{\vt}_{\vI,\vJ}(\vph)\, , 
\end{equation}
where the $C^{\vt}_{il,jl,\vI,\vJ}$ are Clebsch-Gordan coefficients and $\vI$, $\vJ$, are multi-indices.  The range of $\vet$ is obviously quite limited. 
On the other hand, with minimal assumptions on $f$, \textit{e.g.} square integrability on SO$(n)$, and from the completeness of the $\mathrm{T}^{\vt}_{\vI,\vJ}(\vph)$'s, one can Fourier expand $f$ as
\begin{equation}
\label{nfexp}
f(\vph)= \sum_{\vt,\vI,\vJ} f^{\vt}_{\vI,\vJ}\mathrm{T}^{\vt}_{\vI,\vJ}(\vph)\, ,
\end{equation}
and eventually,
\begin{equation}
\label{nfexpal0}
f\left(\vph \circ(\vph _0)^{-1}\right)= \sum_{\vt,\vI,\vJ,\vK} f^{\vt}_{\vI,\vJ}\mathrm{T}^{\vt}_{\vI,\vK}(\vph)\,\mathrm{T}^{\vt}_{\vJ,\vK}\left(\vph_0\right)\, ,
\end{equation}
With \eqref{nclebsch} and \eqref{nfexpal0} in hand, and from the orthogonality of the $\mathrm{T}^{\vt}_{\vI,\vJ}(\vph)$'s, the matrix elements of $\sfN_f$ read as
\begin{equation}
\label{Nfij}
\left[\sfN_f\right]_{ij} = \sum_{l=1}^n \eta_l\sum_{\vt,\vI,\vJ,\vK} f^{\vt}_{\vI,\vJ}\,C^{\vt}_{il,jl,\vI,\vJ}\,\mathrm{T}^{\vt}_{\vJ,\vK}\left(\vph_0\right)\,. 
\end{equation}

The covariance of the quantisation map \eqref{nquantmap} means
\begin{equation}
\label{nqcov}
\mathcal{R}(\vbe)\, A_f\, {}^t\mathcal{R}(\vbe)= A_{\mathrm{R}(\vbe)f}\, ,  \quad (\mathrm{R}(\vbe)f)(\vph ):=f\left((\vbe)^{-1}\circ\vph \right)\,. 
\end{equation}
Let us determine the function $f= 1/n + g$ such that $A_f$ is the density matrix $\rho_{\vze,\vbe}$, \textit{i.e.},
\begin{equation}
\label{rhorhon}
\rho_{\vze,\vbe} = \int_{\mathrm{SO}(n)}\frac{\ud \vph }{c_n}\, \left(\frac{1}{n}+ g(\vph )\right)\, \rho_{\vet,\vph \circ\vph _0}\,.  
\end{equation}
Taking the trace on both sides implies that 
\begin{equation}
\label{gsimpl}
\int_{\mathrm{SO}(n)}\ud \vph\, g(\vph)= 0\, , \quad \mbox{\textit{i.e.}}, \quad \lg g\rg = 0\,.
\end{equation}
Using \eqref{nresunit}, \eqref{northmatel1}, and \eqref{gsimpl} makes \eqref{rhorhon} equivalent to
\begin{equation}
\label{rhorhon1}
\mathcal{R} (\vbe)\,\sfD(\vze)\,{}^t\mathcal{R} (\vbe)= \int_{\mathrm{SO}(n)}\frac{\ud \vph }{c_n}\, g(\vph )\, \mathcal{R}(\vph \circ\vph _0)\,\sfD(\vet)\,{}^t\mathcal{R} (\vph \circ\vph _0)\,.  
\end{equation}
Using the SO$(n)$-covariance,   one gets for each $1\leq l\leq n$
\begin{equation}
\label{rhorhon2}
\frac{\zeta_l}{\eta_l}= \sum_{i,j}\int_{\mathrm{SO}(n)}\frac{\ud \vph }{c_n}\, g\left((\vbe)^{-1} \circ\vph\right)\, \mathcal{R}_{il}(\vph\circ \vph_0)\,\mathcal{R}_{jl} (\vph\circ \vph_0)\,.  
\end{equation}
Using the reduction formula \eqref{nclebsch}, expanding $g(\vph)=\sum_{\vt,\vI,\vJ} g^{\vt}_{\vI,\vJ}\mathrm{T}^{\vt}_{\vI,\vJ}(\vph)$ as in \eqref{nfexp},  using the SO$(n)$ representation law and the orthogonality of the corresponding matrix elements  yields the linear system for the components $g^{\vt}_{\vI,\vJ}$:
\begin{equation}
\label{linsystg}
\frac{\zeta_l}{\eta_l}=  \sum_{i,j,\vt,\vI,\vJ,\vK,\vI^{\prime}} g^{\vt}_{\vI^{\prime},\vJ}\, C^{\vt}_{il,jl,\vI\vJ}\mathrm{T}^{\vt}_{\vI^{\prime},\vJ}(\vbe)\,\mathrm{T}^{\vt}_{\vK,\vJ}(\vph_0)\,.
\end{equation}
Let us impose in our choice of $g$ the relations
\begin{equation}
\label{Cgh}
\sum_{i,j} g^{\vt}_{\vI^{\prime},\vJ}\, C^{\vt}_{il,jl,\vI\vJ}= h^{\vt}_{l,\vI}\delta_{\vI^{\prime} \vJ}\,.
\end{equation}
We next choose   $\vph_0= \left(\vbe\right)^{-1}$. We finally end with the following values of $\eta_l$:
\begin{equation}
\label{etal}
\eta_l= \frac{\zeta_l}{\sum_{\vI,\vt} h^{\vt}_{l,\vI}}\, . 
\end{equation}

Finally, let us consider the $n^2$-dimensional  Lie algebra $\mathfrak{gl}(n,\R)$ and its Cartan decomposition into antisymmetric and symmetric matrices,
\begin{equation}
\label{cartanglR}
\mathfrak{gl}(n,\R)= \mathfrak{k} \oplus \mathfrak{p}\, , 
\end{equation}
it is easy to prove that the quantisation \eqref{nquantmap} maps real-valued functions into the $n(n+1)/2$-dimensional space $\mathfrak{p}$ of symmetric matrices, \textit{i.e.}, the quantum observables, while   $\mathfrak{k}\simeq \mathfrak{so}(n)$ is the $n(n-1)/2$-dimensional Lie algebra of SO$(n)$. In particular, any element $X$ in the  sub-algebra $\mathfrak{k}$ may play the role of a (pseudo)-Hamiltonian as a generator of the  evolution (\textit{i.e.} unitary) 
operator $e^{tX}$ in $\R^n$.

\printbibliography

\end{document}